\documentclass[11pt]{article} 
\usepackage[utf8]{inputenc}
\usepackage[a4paper, portrait, margin=1in]{geometry}
\usepackage{color,amsmath,amsfonts,fullpage,amsthm,mathrsfs,fancyhdr,verbatim,framed,hyperref,xspace,mdframed,tabularx,booktabs,enumitem,comment,float}
\hypersetup{
    colorlinks=true,
    linkcolor=blue,
    citecolor=teal
}
\usepackage[dvipsnames]{xcolor}
\usepackage{graphicx}
\usepackage{mathpazo}
\usepackage{mathtools}
\usepackage{subcaption}
  
\usepackage{caption}
\usepackage{fec} 
\usepackage{soul}
\usepackage{tikz}
\usetikzlibrary{shapes.geometric}
\usepackage[nameinlink, capitalize, noabbrev]{cleveref} 
\usepackage{bm}
\usepackage{bbm}
\newtheorem{theorem}{Theorem}[section]
\newtheorem{lemma}{Lemma}[section]
\newtheorem{corollary}{Corollary}[section]

\newtheorem{proposition}{Proposition}[section]

\usepackage{pifont}
\newtheorem*{problem}{Problem}

\newcommand{\musf}{\boldsymbol{\mu}}

\newcommand{\idx}{\boldsymbol{i}}

\def\ket#1{\mathinner{|{#1}\rangle}}

\usepackage{authblk}

\title{Faster Algorithms for Multimarginal Optimal Transport}
\author[1]{Brandon Augustino\thanks{\texttt{brandon.augustino@jpmchase.com}}}
\author[1]{Yue Sun\thanks{\texttt{yue.sun@jpmorgan.com}}}
\author[1]{Atithi Acharya}
\author[1]{Shouvanik Chakrabarti}
\author[1]{\\\vspace{-.2cm}Junhyung Lyle Kim}
\author[1]{Shree Hari Sureshbabu}
\author[2]{Charlie Che} 
\affil[1]{Global Technology Applied Research, JPMorganChase, New York, NY 10001, USA}
\affil[2]{Quantitative Trading \& Research, JPMorganChase, New York, NY 10179, USA}

\date{}
\begin{document}
\maketitle
\begin{abstract}
We study algorithms for approximating the multimarginal optimal transport (MOT) distance, a generalization of the classic optimal transport distance, between $m$ discrete probability distributions each supported on at most $n$ points. We give a classical algorithm that computes a coupling between these marginals whose expected transportation cost is within an additive $\varepsilon > 0$ of the MOT distance in time $O(m^2 n^m \varepsilon^{-1}\polylog(m,n,\varepsilon^{-1}))$. This is, to our knowledge, the first bound for general MOT problems with simultaneous linear dependence on the dimension $n^m$ and on the accuracy parameter $\varepsilon^{-1}$, improving the prior state of the art.

On the quantum side, we give two algorithms that achieve speedups in dimension, though with worse accuracy dependence than classical approaches. First, we construct a quantum projected subgradient method for estimating the MOT distance within an additive $\varepsilon >0$ with runtime $O( m^3 n^{\frac{m}{2}+1} \varepsilon^{-2} \polylog(m,n,\varepsilon^{-1}))$. This algorithm works with the linear programming dual of the MOT problem, and does not return a coupling. We also give a quantum multimarginal Sinkhorn algorithm for entropy-regularized MOT. This algorithm returns an implicit description of an approximately optimal coupling with runtime $O(m^8n^{\frac{m+1}{2}} \varepsilon^{-5} \polylog(m,n,\varepsilon^{-1})))$ after the usual reduction from entropic MOT to unregularized MOT. We also record query lower bounds: for any precision $\varepsilon<1/2$, randomized classical algorithms require $\Omega(n^m/(1+\varepsilon n))$ queries and quantum algorithms require $\Omega(\sqrt{n^m/(1+\varepsilon n)})$ queries. 
\end{abstract}

\section{Introduction}
Optimal transport (OT)~\cite{Villani2003Topics} has emerged as a powerful framework for comparing and manipulating probability distributions in a variety of fields, including computer vision~\cite{rubner2000earth, bonneel2011displacement, solomon2015convolutional}, statistics~\cite{flamary2018wasserstein, panaretos, szekely2004testing, bigot}, and machine learning~\cite{sandler2011nonnegative, mueller2015principal, lloyd2015statistical, jitkrittum2016interpretable, Arjovsky}. At its core, OT involves finding a coupling between two probability measures that minimizes the integral cost of transporting ``mass'' from one distribution to the other subject to marginal constraints. This approach provides a geometrically meaningful distance between probability distributions (the Wasserstein distance) with desirable properties such as robustness to outliers and preservation of geometric structure. In machine learning, OT has proven essential for generative modeling, domain adaptation, clustering, and other tasks where one needs to measure distances or interpolate between complex, high-dimensional distributions.

A key generalization of OT is the multimarginal optimal transport (MOT) problem, where the goal is to find a joint coupling among several measures that minimizes a multi-linear cost. Applications abound in barycenter computation (combining multiple distributions into a representative ``average'' under the Wasserstein metric)~\cite{Agueh2011Barycenters, Cuturi2014Fast, Benamou2015Iterative, Carlier2015Numerical, Srivastava2018Scalable}, statistical physics~\cite{Seidl2007Strictly, Buttazzo2012Optimal, Cotar2013Density, Mendl2013Kantorovich}, matching in economics~\cite{ekeland2005optimal, carlier2010matching}, financial mathematics~\cite{Dolinsky2014Robust, Galichon2014Stochastic} and generative adversarial networks~\cite{Choi2018Stargan, Cao2019Multi}. 

Let $m \geq 2$ and define $\xmbf \teq (x_1, \dots, x_m)$ and $\musf \teq (\mu_1, \dots, \mu_m)$. One can express the MOT problem in the formalism of Kantorovich~\cite{kantorovich1942translocation} as
\begin{equation}\label{e:multiOT}\tag{MOT}
    \min_{\pi \in \mathsf{M} (\musf)} \int_{M_1 \times \cdots \times M_m}  \mathbf{c} (\xmbf) \textup{d} \pi (\xmbf),
\end{equation}
where $\mathsf{M} (\musf)$ is the set of positive joint measures on the product space $M_1 \times \cdots \times M_m$ with marginals $\{\mu_i\}_{i \in [m]}$ and $\mathbf{c}: M_1 \times \cdots \times M_m \to \mathbb{R}$ is the given (measurable) cost function. The classic OT problem corresponds to the special case of \eqref{e:multiOT} where $m = 2$. 

In this paper, we consider discrete MOT problems. For each $k \in [m]$, suppose that the marginal $\mu_k$ is supported on points $z_{k,1},\ldots,z_{k,n_k}\in M_k$, where $n_k \le n$. The continuous cost function $\mathbf{c}$ then induces the nonnegative cost tensor $C = [C_{i_1,\ldots,i_m}] \in \mathbb{R}^{n_1\times\cdots\times n_m}$ with entries\footnote{This notation does not require the product support or the tensor $C$ to be stored explicitly: a product point is addressed by the multi-index $\idx=(i_1,\ldots,i_m)$, and its cost can be generated from the stored marginal supports and the function $\mathbf{c}$. We retain the dense tensor notation because it makes the LP formulations transparent, and because the lower bounds later also consider arbitrary unstructured cost tensors supplied through entry queries.} given by $C_{i_1,\ldots,i_m}=\mathbf{c}(z_{1,i_1},\ldots,z_{m,i_m})$. Let 
$X = [X_{i_1,\ldots,i_m}] \in \mathbb{R}_{\ge 0}^{n_1\times\cdots\times n_m}$ be a nonnegative coupling tensor. The $k$-th marginal of $X$ is the vector $\proj_k(X)\in\mathbb{R}^{n_k}$ obtained by summing over all modes except the $k$-th mode, so the set of admissible couplings is 
$$\Ucal (\musf) \teq \{ X \geq 0 : \proj_k(X)=\mu_k,~\forall k\in[m]\}.$$
Discrete MOT is the finite-dimensional linear program (LP)
\begin{equation}\tag{MOT-P}\label{e:discrete_MOT_intro}
   \textup{\textsf{MOT}}(C, \musf) \teq \min_{X\in \Ucal (\musf)} \langle C,X\rangle.
\end{equation}
The dual LP associated to \eqref{e:discrete_MOT_intro} is
\begin{equation}\label{e:MOT_dual_tensor_intro}\tag{MOT-D}
    \begin{aligned}
        \max_{y_1\in\mathbb{R}^{n_1},\ldots,y_m\in\mathbb{R}^{n_m}}\quad
        &\sum_{k=1}^m \mu_k^\top y_k \\
        \textup{subject to}\quad
        &\sum_{k=1}^m y_{k,i_k}\le C_{i_1,\ldots,i_m},
        \qquad i_k\in[n_k]~\forall k\in[m].
    \end{aligned}
\end{equation}
The product coupling $\mu_1\otimes\cdots\otimes\mu_m$ is primal-feasible, every $X \in \Ucal (\musf)$ has total mass one, and the objective is finite. Hence standard finite-dimensional LP strong duality applies to \eqref{e:discrete_MOT_intro}-\eqref{e:MOT_dual_tensor_intro}: for a primal optimal $X^{\star} \in \Ucal(\musf)$ and dual optimal $y^{\star} \in \R{mn}$, one has
$$\sum_{k=1}^m \mu_k^\top y^{\star}_k =  \langle C,X^{\star} \rangle = \textup{\textsf{MOT}}(C,\musf).$$
If a listed support point has zero marginal mass, deleting that coordinate leaves the set of feasible couplings and the MOT value unchanged. 

It is well-known that the discrete bimarginal OT problem (where $m = 2$) can be expressed as a minimum cost flow problem~\cite{schrijver2003combinatorial}, enabling it to be solved via the network simplex method~\cite{orlin1997polynomial, tarjan1997dynamic} and specialized interior point methods (IPMs) for solving network flow problems~\cite{daitch2008faster, lee2014path}. By adding an entropic regularization term to the objective function, bimarginal OT can also be solved by matrix scaling procedures including Sinkhorn's method~\cite{cuturi2013sinkhorn} and the box-Newton method~\cite{cohen2017matrix}. These combinatorial, interior-point, and scaling approaches have been successively improved in their application to bimarginal OT~\cite{altschuler2017near, Benamou2015Iterative, blanchet2024towards}. For dense instances with marginals supported on $n$ points, the best known constructive algorithms compute a coupling with cost within $\varepsilon > 0$ of optimal in time $\widetilde O(n^2\|C\|_\infty/\varepsilon)$, up to logarithmic factors~\cite{blanchet2024towards}.

In contrast, MOT loses the network-flow structure that underlies much of the bimarginal theory: the MOT problem is not a minimum cost flow problem whenever $m \geq 3$~\cite{lin2022complexity}, obstructing the use of network-simplex methods and specialized IPMs. The LP formulation has only $M=\sum_k n_k$ marginal constraints but $N=\prod_k n_k$ variables and $mN$ nonzeros. In the case where each of the marginals is supported on $n_k = n$ data points, the problem dimension $N=n^m$ is exponential in the number of marginals. Since an explicit transportation tensor has $N$ entries, near-linear dependence on $N$ is the natural baseline for constructive algorithms. Generic interior-point methods applied to this LP have worst-case complexity on the order of $\widetilde O(N^{\omega})$~\cite{wright1997primal, cohen2021solving, van2020deterministic}, far beyond this benchmark and impractical even for moderate support sizes. 

This gap makes clear why nonlinear optimization methods have become central to computational MOT. A recent line of work~\cite{Benamou2015Iterative, tupitsa2020multimarginal, lin2022complexity} has found success working with the entropy-regularized analogue of \eqref{e:discrete_MOT_intro}, leading to algorithms that solve \eqref{e:discrete_MOT_intro} with near-linear scaling in the dense tensor dimension. Let
$$H(X)\teq-\sum_{\idx}X_{\idx}\log X_{\idx}$$ denote the discrete Shannon entropy, with the convention $0\log0=0$. For a regularization parameter $\eta>0$, define
\begin{equation}\label{e:entropic_mot_intro}\tag{$\mathrm{MOT}_\eta$}
    \textup{\textsf{MOT}}_\eta(C,\musf)
    \teq
    \min_{X\in\Ucal(\musf)}
    \left\{\langle C,X\rangle-\eta H(X)\right\}.
\end{equation}
The feasible set is unchanged from \eqref{e:discrete_MOT_intro}, but the entropy term makes the optimization problem strictly convex. This regularized formulation creates a tradeoff between dimension dependence and precision for algorithms targeting unregularized MOT: methods achieving near-linear dependence on the dense tensor dimension scale quadratically with inverse precision~\cite{lin2022complexity}, while methods with better accuracy dependence lose strict linear dependence on the dense tensor dimension~\cite{tupitsa2020multimarginal}.

\subsection{Contributions}
We give classical and quantum algorithms for \eqref{e:discrete_MOT_intro}, \eqref{e:MOT_dual_tensor_intro} and \eqref{e:entropic_mot_intro}, together with query lower bounds that calibrate their dimension dependence across accuracy regimes. The algorithms have different output guarantees. The classical algorithm is constructive: it returns an explicit coupling tensor with the prescribed marginals. The quantum algorithms estimate the MOT cost, and can also return an implicit description of the coupling. They do not return an explicit primal coupling tensor. All displayed upper bounds count one cost-entry evaluation as unit work. In the structured OT setting, such an entry is generated by evaluating $\mathbf{c}$ on the indexed marginal support points; direct tensor-entry oracles are only needed for arbitrary unstructured costs. For the quantum upper bounds, zero-mass support points are pruned when a full-support hypothesis is used. We contextualize our results in Table~\ref{tab: compareOT}.

\paragraph{A faster classical algorithm for MOT.} 
We obtain a randomized first-order algorithm for solving~\eqref{e:discrete_MOT_intro} that returns a coupling $X\in\Ucal(\musf)$ satisfying
\[
     \langle C,X\rangle\le \textup{\textsf{MOT}}(C,\musf)+\varepsilon.
\]
When all marginals have support size $n$ and $\| C\|_{\infty} \teq \max_{j \in [n^m]} C_j$, the runtime is
\[
    O\!\left(m^2 n^m \; \frac{\|C\|_\infty}{\varepsilon} \polylog \left(m, n, \frac{\|C\|_\infty}{\varepsilon} \right)\right),
\]
which is linear in the tensor dimension $N=n^m$, up to the displayed factor $m$ and logarithmic factors. Thus the algorithm is near-linear in the cost of even writing down the returned coupling, and it has linear dependence on the accuracy scale $\|C\|_\infty/\varepsilon$.
The formal runtime statement appears in Section~\ref{s:mart}, Corollary~\ref{corr_final}.

The main reduction is to a positive packing LP with only $mN$ nonzeros. A solution of this packing problem is a partial coupling; a rank-one completion step then restores the marginals exactly, so the final output is feasible for the original MOT problem rather than merely approximately marginal-feasible. This can be viewed as a multimarginal generalization of the approach used in \cite{blanchet2024towards} for bimarginal OT.

\paragraph{Quantum speedups for MOT and OT.}
On the quantum side, we apply projected subgradient descent to the dual discrete MOT problem~\eqref{e:MOT_dual_tensor_intro}, where quantum subroutines are used to compute the subgradients within classical subgradient iterates. By strong duality, this yields an additive estimate of $\textup{\textsf{MOT}}(C,\musf)$ in sublinear time in the dense dimension $n^m$. This approach runs in time 
\[
    O\!\left(m^3 n^{\frac{m}{2} + 1} \left( \frac{\|C\|_\infty}{\varepsilon} \right)^2 \polylog \left(m, n, \frac{\|C\|_\infty}{\varepsilon} \right)\right).
\]
The formal statement is given in Section~\ref{s:quantum}, Theorem~\ref{t:quantum_md}.

We also give a quantum multimarginal Sinkhorn algorithm that solves~\eqref{e:discrete_MOT_intro} via its entropy-regularized variant~\eqref{e:entropic_mot_intro}. Combining primitives used in quantum algorithms for (bimarginal) matrix scaling~\cite{vanapeldoorn2021scaling, gribling2022scaling, nieuwboer2024classical} with the multimarginal Sinkhorn framework of Lin et al.~\cite{lin2022complexity} yields a quantum multimarginal Sinkhorn algorithm, and after the standard entropic reduction gives scaling potentials encoding an approximately optimal coupling in time
\[
   O\!\left(m^8 n^{\frac{m+1}{2}}
    \left(\frac{\|C\|_\infty}{\varepsilon}\right)^5 \polylog \left(m, n, \frac{\|C\|_\infty}{\varepsilon} \right)\right).
\]
This algorithm returns an implicit description of a solution to entropic MOT; rounding it to an explicit solution of unregularized OT requires $O(n^m)$ additional classical work \cite{altschuler2017near, lin2022complexity}. Taking $m = 2$, this framework achieves a modest quantum speedup for bimarginal OT while the QPSM does not. For more details; the entropy-regularized multimgarginal Sinkhorn guarantee is stated in Section~\ref{s:quantum_tensor_sinkhorn}, Theorem~\ref{t:quantum_tensor_sinkhorn}, and the unregularized implicit-output bound is Corollary~\ref{c:quantum_tensor_sinkhorn_unreg}.

\paragraph{Lower bounds.}
We record an elementary cost-query lower bound that interpolates between constant and fine accuracy. For any $\varepsilon<1/2$, a block-search reduction over a partition of $[n]^m$ into pieces of size about $\varepsilon n$ gives randomized classical query complexity $\Omega(n^m/(1+\varepsilon n))$ and bounded-error quantum query complexity $\Omega(\sqrt{n^m/(1+\varepsilon n)})$. Thus constant-accuracy MOT already requires $\Omega(n^{m-1})$ classical and $\Omega(n^{\frac{m-1}{2}})$ quantum cost queries, while accuracy below the single-entry mass scale $1/n$ requires $\Omega(n^m)$ classical and $\Omega(n^{\frac{m}{2}})$ quantum cost queries.  
The proof appears in Section~\ref{s:lower_bounds}, Theorem~\ref{thm:query_lower_bound_mot}.
 
\begin{table}[t]
\centering
\caption{Complexity for determining optimal transportation cost to additive error $\varepsilon$. The $\widetilde{O}$ notation hides $\poly(\log n,\log m, \log \|C\|_{\infty}/\varepsilon)$ factors. AAM -- Accelerated Alternating Minimization; AS -- Accelerated Sinkhorn; BCN -- Box Newton Method; IPM -- Interior Point Method; SMD -- Stochastic Mirror Descent; QPSM -- Quantum Projected Subgradient Method.
}
\label{tab: compareOT}
\resizebox{\textwidth}{!}{%
\begin{tabular}{@{}lll@{\hspace{1cm}}lll@{}}
\toprule
\multicolumn{3}{c}{\textbf{Multimarginal OT }} & \multicolumn{3}{c}{\textbf{Bimarginal OT}}\\
\cmidrule(lr){1-3} \cmidrule(lr){4-6}
\textbf{Algorithm} & \textbf{Time Complexity} & \textbf{Ref.} & \textbf{Algorithm} & \textbf{Time Complexity} & \textbf{Ref.} \\
\midrule
IPM  & $\widetilde{O}((mn+n^{m})^{\omega})$ & \cite{cohen2021solving}  & Simplex & $\widetilde{O}(n^3)$ & \cite{orlin1997polynomial, tarjan1997dynamic} \\
AAM  & $\widetilde{O}(m^{3} n^{m+\frac{1}{2}} (\|C\|_{\infty}/\varepsilon))$  & \cite{tupitsa2020multimarginal}  & IPM  & $\widetilde{O}(n^{\frac{5}{2}})$ & \cite{lee2014path} \\
Sinkhorn & $\widetilde{O}(m^{3} n^{m} (\|C\|_{\infty}/\varepsilon)^2)$ & \cite{lin2022complexity}  & PDHG  & $\widetilde{O}(n^{\frac{7}{2}})$ & \cite{lu2024pdot} \\
AS & $\widetilde{O}(m^{3} n^{m+\frac{1}{3}} (\|C\|_{\infty}/\varepsilon)^{\frac{4}{3}})$ & \cite{lin2022complexity}  & PDAGD  & $\widetilde{O}\Bigl(\min\{n^{\frac{9}{4}} (\|C\|_{\infty}/\varepsilon),\, n^{2} (\|C\|_{\infty}/\varepsilon)^2\}\Bigr)$ & \cite{dvurechensky2018computational} \\
\textcolor{blue}{SMD} & $\textcolor{blue}{\widetilde{O}(m^2 n^{m} (\|C\|_{\infty}/\varepsilon))}$ & \textcolor{blue}{Cor.~\ref{corr_final}} & SMD  & $\widetilde{O}(n^{2} (\|C\|_{\infty}/\varepsilon))$ & \cite{blanchet2024towards} \\
\textcolor{blue}{QPSM} & $\textcolor{blue}{\widetilde{O}(m^3 n^{\frac{m}{2} + 1} (\|C\|_{\infty}/\varepsilon)^{2})}$ & \textcolor{blue}{Thm.~\ref{t:quantum_md}} & \textcolor{blue}{QPSM} & $\textcolor{blue}{\widetilde{O}(n^{2} (\|C\|_{\infty}/\varepsilon)^{2})}$ & \textcolor{blue}{Thm.~\ref{t:quantum_md}}  \\
\textcolor{blue}{QSinkhorn} & $\textcolor{blue}{\widetilde{O}(m^8 n^{\frac{m+1}{2}}(\|C\|_{\infty}/\varepsilon)^5)}$ & \textcolor{blue}{Cor.~\ref{c:quantum_tensor_sinkhorn_unreg}} & \textcolor{blue}{QSinkhorn} & $\textcolor{blue}{\widetilde{O}( n^{\frac{3}{2}} (\|C\|_{\infty}/\varepsilon)^{5} )}$ & \textcolor{blue}{Cor.~\ref{c:quantum_tensor_sinkhorn_unreg}} \\
\bottomrule
\end{tabular}%
}
\end{table}

\subsection{Paper organization}
The rest of this paper is organized as follows. Section~\ref{s:prelim} introduces notation, norm conventions, and the projected-subgradient method used in the value algorithm. Section~\ref{s:mart} fixes the finite-dimensional marginal-incidence notation, reduces MOT to a positive packing LP, and proves the rank-one completion and classical runtime guarantees. Section~\ref{s:quantum} reformulates the MOT dual as a nonsmooth optimization problem and combines projected subgradient descent with quantum methods for subgradient computation to estimate the MOT value and produce a dual certificate. Section~\ref{s:quantum_tensor_sinkhorn} gives a quantum multimgarginal Sinkhorn algorithm for entropy-regularized MOT, and Section~\ref{s:lower_bounds} proves the classical and quantum cost-query lower bounds. 

\section{Preliminaries}\label{s:prelim}
We write $[n] \teq \{ 1, \dots, n\}$. We represent the $n \times 1$ all-ones vector by $\mathbf{1}_n \in \R{n}$. We let $\R{n}_{\geq 0}$ (resp. $\R{n}_{> 0}$) denote the nonnegative (resp. positive) orthant, and the probability simplex is $\Delta^n \teq \{ x \in \R{n}_{\geq 0} : \sum_{i \in [n]} x_i = 1 \}$. We define $\Delta^n_{> 0} \teq \R{n}_{> 0} \cap \Delta^n$.  

For a vector $x \in \R{n}$ and $p \in [1, \infty]$, we denote by $\| x \|_p$ the $\ell_p$-norm of $x$. We define the $n$-dimensional $\ell_p$-ball of radius $R$ centered at the origin as $\mathcal{B}_p^n (R) \teq \{ x \in \R{n} : \| x \|_p \leq R \}$. 

For vectors $(x, y) \in \R{n}$, $\inner{x,y}$ denotes their inner product while $x \odot y$ denotes their Hadamard (or, element-wise) product. For matrices $A, B \in \R{m \times n}$, $A \otimes B$ denotes their Kronecker product. For a tensor $A = (A_{i_1, \dots, i_m}) \in \R{n_1 \times \cdots \times n_m}$, $\mvec(A)$ is its $\prod_{k \in [m]} n_k \times 1$ vectorization. When all tensor modes have size $n$, we write a tensor coordinate as a multi-index $\idx=(i_1,\ldots,i_m)\in[n]^m$, and use notation such as $C_{\idx}$, $x_{\idx}$, and $A_{j,\idx}$. For two tensors $A$ and $B$ of equal dimension, their inner product is $\inner{A , B} = \mvec(A)^{\top} \mvec(B)$. We also adopt the convention that the $\ell_p$-norm of a tensor $A$ is the $\ell_p$-norm of its vectorization, i.e., $\| A \|_{p} \teq  \| \mvec(A) \|_{p}.$

\subsection{Norms and regularity}
For a norm $\|\cdot\|$, we write the dual norm as
$$
    \|g\|_* \teq \sup_{\|x\|\le 1}\langle g,x\rangle.
$$
Let $\Xcal\subseteq\R{d}$ be convex and let $f:\Xcal\to\R{}$ be convex. A vector $g\in\R{d}$ is a subgradient of $f$ at $x\in\Xcal$ if
$$
    f(\bar{x})\ge f(x)+\langle g,\bar{x}-x\rangle
    \qquad \forall \bar{x}\in\Xcal.
$$
The subdifferential $\partial f(x)$ is the set of all such subgradients. We say that $f$ is $G$-Lipschitz on $\Xcal$ with respect to $\|\cdot\|$ if
$$
    |f(x)-f(\bar{x})|\le G\|x-\bar{x}\|
    \qquad \forall (x,\bar{x})\in\Xcal \times \Xcal.
$$
In particular, if every subgradient satisfies $\|g\|_*\le G$ on $\Xcal$, then $f$ is $G$-Lipschitz on $\Xcal$.  

We say that $f$ is \textit{$L$-smooth} with respect to $\| \cdot\|$ if the gradients $\nabla f$ are $L$-Lipschitz continuous:
$$ \left\|\nabla f (x) - \nabla f (\bar{x}) \right\|_* \leq L \left\| x - \bar{x} \right\| \quad \forall (x, \bar{x}) \in \Xcal \times \Xcal.$$
A function $f$ is $\nu$-\textit{strongly convex} on $\mathcal{X}$ if there exists a $\nu > 0$ such that 
\begin{equation*}
    f(\bar{x}) \geq f(x) + \langle \nabla f(x), \bar{x} - x \rangle + \frac{\nu}{2} \|\bar{x} - x\|^2.
\end{equation*}

\subsection{Projected subgradient method}\label{s:qpsm_prelim}
Our first quantum MOT algorithm is a hybrid projected-subgradient method that uses quantum subroutines to compute the subgradients at each iteration. This does not require general quantum subgradient estimation~\cite{van2020convex, chakrabarti2020quantum} or quantum subgradient algorithms~\cite{kim2026fast}: the subgradients we need to compute have a closed-form expression, so we instead give quantum methods for computing them directly, rather than rely on these more general tools. This greatly simplifies the analysis.

For a closed convex set $\Xcal\subseteq\mathbb{R}^d$, let $\Pi_\Xcal$ denote Euclidean projection onto $\Xcal$. We use the following standard projected-subgradient bound; see, for example, \cite[Theorem~3.2]{bubeck2015convex}.

\begin{theorem}[Projected subgradient method]\label{t:qpsm_prelim}
    Let $f:\Xcal\to\mathbb{R}$ be convex on a closed convex set $\Xcal\subseteq\mathbb{R}^d$. Suppose that the starting point $x^{(0)}\in\Xcal$ and some minimizer $x^\star\in\Xcal$ satisfy $\|x^{(0)}-x^\star\|_2\le R$. Assume that, at every iterate $x^{(t)}$, the algorithm is given a subgradient $g^{(t)}\in\partial f(x^{(t)})$ satisfying $\|g^{(t)}\|_2\le G$. Then the projected subgradient iterates
    \[
        x^{(t+1)}=\Pi_\Xcal\left(x^{(t)}-\eta g^{(t)}\right),
        \qquad
        \bar x=\frac1T\sum_{t=1}^Tx^{(t)},
    \]
    with step size $\eta=R/(G\sqrt T)$ satisfy
    \[
        f(\bar x)-f(x^\star)
        \le
        \frac{GR}{\sqrt T}.
    \]
    In particular, if $T=O(G^2R^2/\varepsilon^2)$, then $\bar x$ is an $\varepsilon$-approximate minimizer.
\end{theorem}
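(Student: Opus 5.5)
The plan is the standard potential-function argument that tracks the squared Euclidean distance $\|x^{(t)}-x^\star\|_2^2$ to the fixed minimizer. First, we use that Euclidean projection onto a closed convex set is nonexpansive and fixes points of $\Xcal$. Since $x^\star\in\Xcal$,
\[
\|x^{(t+1)}-x^\star\|_2^2 \le \|x^{(t)}-\eta g^{(t)}-x^\star\|_2^2 = \|x^{(t)}-x^\star\|_2^2 - 2\eta\langle g^{(t)},x^{(t)}-x^\star\rangle + \eta^2\|g^{(t)}\|_2^2 .
\]
Rearranging and using $\|g^{(t)}\|_2\le G$ gives the one-step inequality
\[
\langle g^{(t)},x^{(t)}-x^\star\rangle \le \frac{\|x^{(t)}-x^\star\|_2^2-\|x^{(t+1)}-x^\star\|_2^2}{2\eta}+\frac{\eta G^2}{2}.
\]

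Second, we convert inner products into function gaps. The subgradient inequality at $x^{(t)}$, evaluated at $\bar x=x^\star$, gives $f(x^{(t)})-f(x^\star)\le\langle g^{(t)},x^{(t)}-x^\star\rangle$. We sum the one-step bound over $T$ consecutive iterations. The distance terms telescope, leaving at most $\|x^{(\mathrm{first})}-x^\star\|_2^2/(2\eta)+T\eta G^2/2$. We then divide by $T$ and apply Jensen's inequality (convexity of $f$, with $\bar x\in\Xcal$ by convexity of $\Xcal$) to bound $f(\bar x)-f(x^\star)$ by the average gap. This yields
\[
f(\bar x)-f(x^\star)\le \frac{R^2}{2\eta T}+\frac{\eta G^2}{2}.
\]
Substituting $\eta=R/(G\sqrt T)$ makes both terms equal to $GR/(2\sqrt T)$, so the bound is $GR/\sqrt T$. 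Setting this at most $\varepsilon$ gives $T\ge G^2R^2/\varepsilon^2$, which proves the final claim.

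The only point needing care is the indexing. The theorem averages $x^{(1)},\dots,x^{(T)}$ and assumes the distance bound $R$ only at $x^{(0)}$. The telescoping sum over $t=1,\dots,T$ starts from $\|x^{(1)}-x^\star\|_2^2$, and this quantity is not directly controlled. The first fix I would try is to use the one-step inequality at $t=0$ together with $f(x^{(0)})\ge f(x^\star)$. That inequality shows $\|x^{(1)}-x^\star\|_2^2\le R^2+\eta^2G^2$. Carrying this through costs only an extra $\eta G^2/(2T)$, which at the chosen step size is $O(GR/T^{3/2})$ and is absorbed by adjusting constants. The alternative, which is equivalent up to the constant, is to sum over $t=0,\dots,T-1$ and relabel the average. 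Either route leaves the $O(G^2R^2/\varepsilon^2)$ iteration count unchanged, and that count is all the paper later uses. Apart from this indexing issue, every step is routine.
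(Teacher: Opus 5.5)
Your argument is correct and is the standard proof behind the result, which the paper cites rather than proves (\cite[Theorem~3.2]{bubeck2015convex}: nonexpansive projection, subgradient inequality, telescoping, Jensen). Your second fix, averaging $x^{(0)},\dots,x^{(T-1)}$, is exactly the indexing used in that source.

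Your indexing worry is genuine, not an artifact of the proof. Take the convex, $1$-Lipschitz function $f(z)=\max\{0,\langle g_0,z\rangle,\langle g_1,z\rangle\}$ on $\mathbb{R}^2$, with $g_0=(\tfrac14,\tfrac{\sqrt{15}}{4})$ and $g_1=(\tfrac14,-\tfrac{\sqrt{15}}{4})$. Set $x^{(0)}=(1,0)$, $x^\star=0$, $R=G=T=1$, and $g^{(0)}=g_0\in\partial f(x^{(0)})$ (both nonzero pieces are active at $x^{(0)}$). Then $\bar x=x^{(1)}=(\tfrac34,-\tfrac{\sqrt{15}}{4})$ and $f(\bar x)-f(x^\star)=\tfrac98>GR/\sqrt T$.

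So the exact constant fails for the average over $x^{(1)},\dots,x^{(T)}$ as stated. Your $\bigl(1+\tfrac{1}{2T}\bigr)$-weakened bound, or the reindexed average, is the right repair. Neither affects the $O(G^2R^2/\varepsilon^2)$ iteration count that the paper uses later.
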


\section{Solving Multimarginal Optimal Transport with LP Packing}\label{s:mart}
In this section, we solve discrete MOT by reducing it to a positive packing LP, then completing the resulting partial coupling. The packing solver below is not applied to the standard-form equality LP; rather, it uses the same marginal-incidence data with inequalities and a shifted nonnegative objective. This reduction generalizes the approach used in the state of the art algorithm for bimarginal OT~\cite{blanchet2024towards}. 

In order to extend to the multimarginal setting, we rely on the observation that if a partial coupling $X$ has deficit vectors $d_k \teq \mu_k - \proj_k(X) \in \R{n_k}_{\geq 0}$ for $k \in [m]$, then all deficits have the same total mass $\tau$, and 
$$ D = \tau^{1-m} d_1 \otimes \cdots \otimes d_m$$
has exactly those deficits as its marginals. Therefore $X+D$ is an exact coupling. This reduces dense MOT to a packing LP whose constraint matrix has exactly $mN$ nonzero entries. 

\subsection{Finite-dimensional linear programming formulation}
We begin by recalling the discrete MOT problem and its dual:
\begin{problem}
\begin{equation}\tag{MOT-P}\label{e:MOT}
    \begin{aligned}
       \textup{\textsf{MOT}}(C, \musf) \teq\min_{X  \in  \Ucal(\musf)}    \langle C , X \rangle 
    \end{aligned}
\end{equation}
where $C = [C_{i_1, \dots i_m}]_{i_1 \in [n_1], \dots, i_m \in [n_m]}$ denotes the non-negative cost tensor with $m$ modes, $\mu_k = [\mu_{k,i}]_{i \in [n_k]}$ is the probability vector representing the $k$-th marginal and 
$$ \Ucal(\musf) \teq \{X \geq 0 : \proj_k(X)=\mu_k~\forall k\in[m]\}.$$
The dual problem is
\begin{equation}\label{e:MOT_dual_tensor}\tag{MOT-D}
    \begin{aligned}
        \textup{\textsf{MOT}}(C,\musf)
        =\max_{y_1\in\mathbb{R}^{n_1},\ldots,y_m\in\mathbb{R}^{n_m}}\quad
        &\sum_{k=1}^m \mu_k^\top y_k \\
        \textup{subject to}\quad
        &\sum_{k=1}^m y_{k,i_k}\le C_{i_1,\ldots,i_m},
        \qquad i_k\in[n_k]~\forall k\in[m].
    \end{aligned}
\end{equation}
\end{problem}

One can cast \eqref{e:MOT} and its dual in standard form using a vectorized representation. Write
$N=\prod_{k=1}^m n_k$, $M=\sum_{k=1}^m n_k$, and define $x=\mvec(X)$. Let
$A\in\{0,1\}^{M\times N}$ be the marginal-incidence matrix with rows indexed by
$(k,r)$ and columns indexed by $\idx=(i_1,\ldots,i_m)$, so that
$A_{(k,r),\idx}=\mathbbm{1}[i_k=r]$. Upon setting
\begin{equation}\label{eq:multi-ot-data}
    c = \mvec (C) \in \mathbb{R}^{N}, 
    \quad b = \mu \teq
    \begin{bmatrix}
        \mu_1 \\ \vdots \\ \mu_m
    \end{bmatrix} \in \mathbb{R}^{M}, 
    \quad A = 
    \begin{bmatrix}
        A_1 \\ 
        \vdots \\
        A_m
    \end{bmatrix}  \in \mathbb{R}^{M \times N},
\end{equation}
the vectorized primal is the standard-form LP
\begin{equation}\tag{P}\label{eq:LP}
      \min_{x\geq 0}\{c^{\top} x: Ax=b\},
\end{equation}
and its dual is
\begin{equation}\tag{D}\label{eq:LP-D}
     \max_{y\in\mathbb{R}^M}\{b^\top y: A^\top y\le c\}.
\end{equation} 
The $M \times N$ constraint matrix $A$ contains $\nnz(A)=mN$ nonzero entries: for each column
$\idx$, exactly one row in each marginal block is active. 

For the packing reduction, the important object is the same incidence matrix $A$, but with marginal inequalities $Ax \leq b$ rather than the equality constraint in the standard-form LP. The notation above will be used again in Section~\ref{s:quantum}, where a quantum algorithm works with \eqref{eq:LP-D}.
\subsection{The packing-LP black box}
We use the following theorem in the form stated of \cite[Theorem 3]{blanchet2024towards}, which invokes the packing solver of Allen-Zhu and Orecchia~\cite{allen2019nearly}. 
\begin{theorem}[Packing LP solver, see Corollary 3.5 in \cite{allen2019nearly}]\label{t:packing_solver}
    Consider 
    \begin{equation}\label{e:packingLP}
        P^{\star} \teq \max \{w^{\top} x : Ax \leq b, x \geq 0 \},
    \end{equation}
    where $A \in \R{M \times N}_{\geq 0}, b \in \R{M}_{\geq 0}$ and $w \in \R{N}_{\geq 0}$. For every $\delta \in (0,1)$, there is a randomized algorithm, which, with high probability, returns $x_{\delta} \geq 0$ satisfying
    $$ A x_{\delta} \leq b, \quad w^{\top} x_{\delta} \geq (1-\delta) P^{\star},$$
    in arithmetic time 
    $$ O \left( M+N+\frac{\nnz(A)}{\delta}\log\!\left(\frac{MN}{\delta}\right)\log\!\left(\frac{1}{\delta}\right) \right).$$
\end{theorem}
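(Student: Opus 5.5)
This statement is imported from the literature: it is \cite[Corollary~3.5]{allen2019nearly} in the form recorded in \cite[Theorem~3]{blanchet2024towards}. So the plan is to reduce our instance to the normalized packing form that Allen-Zhu and Orecchia treat, invoke their guarantee, and check that the normalization costs only the additive $O(M+N)$ term.

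\textbf{Normalization.} First, delete every row $j$ with $b_j=0$, together with every column having a nonzero entry in such a row. Such columns are forced to zero by $Ax\le b$ and $x\ge 0$. Likewise, drop every column with $w_{\idx}=0$, and fix those coordinates at zero. Neither step changes $P^\star$, and each takes $O(M+N+\nnz(A))$ time.

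Next, rescale. Set $\tilde A_{j,\idx}=A_{j,\idx}/(b_j w_{\idx})$ and substitute $z_{\idx}=w_{\idx}x_{\idx}$. The problem becomes $\max\{\mathbf 1^\top z:\tilde A z\le \mathbf 1,\ z\ge 0\}$. This is exactly the standard packing LP of \cite{allen2019nearly}, and the rescaling preserves the sparsity pattern, so $\nnz(\tilde A)=\nnz(A)$.

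\textbf{Invoking the solver.} Their accelerated randomized coordinate method, run with the truncated entropy regularizer, produces a feasible $z$ with $\mathbf 1^\top z\ge (1-\delta)\mathrm{OPT}$ with high probability. It uses $O(\log(MN/\delta)\log(1/\delta)/\delta)$ epochs, and each epoch costs $O(\nnz(A))$ arithmetic operations because the updates are coordinate-wise and sparse. Mapping back through $x_{\idx}=z_{\idx}/w_{\idx}$ preserves both feasibility and the objective value, which gives the stated $x_\delta$.

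\textbf{Main obstacle.} Their theorem returns only an approximately feasible point, with $\tilde A z\le(1+\delta)\mathbf 1$ up to constants. The fix is to scale the output by $1/(1+O(\delta))$ and replace $\delta$ by a constant multiple of itself. This restores exact feasibility at the price of constant factors in the runtime bound.

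Finally, add the preprocessing cost $O(M+N)$. Since this is exactly the cited form in \cite{blanchet2024towards}, the paper can simply refer to it, and the remaining care is confirming that the polylogarithmic factors match those stated.
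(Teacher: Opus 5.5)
Your proposal is correct and matches the paper, which gives no argument of its own and simply imports \cite[Corollary~3.5]{allen2019nearly} in the form of \cite[Theorem~3]{blanchet2024towards}. Your preprocessing (discarding rows with $b_j=0$ and zero-weight columns, then rescaling to $\max\{\mathbf{1}^\top z:\tilde A z\le\mathbf{1},\ z\ge 0\}$) and the $1/(1+O(\delta))$ rescaling for exact feasibility are the routine steps implicit in that citation; the only points worth making explicit are that a surviving all-zero column with positive weight makes $P^\star=+\infty$ and must be excluded (this cannot happen in the MOT application, where every column has exactly $m$ ones), and that if ``with high probability'' means $1-1/\mathrm{poly}$ rather than a constant, boosting by independent repetitions (keeping the best feasible output) adds a logarithmic factor not shown in the stated bound.
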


\subsection{From a packing solution to a multimarginal coupling}
Let $L, U \in \R{}$ be such that 
$$ L \leq C_i \leq U \quad \forall i \in I,$$
and set $\Delta = U- L$. For simplicity, one may choose $L = \min_i C_i$ and $U = \max_i C_i$, computable in $O(N)$ time. Define the nonnegative tensor 
\begin{equation}
    B \teq U \mathbf{1} - C \quad 0 \leq B_i \leq \Delta.
\end{equation} 
Consider the packing LP 
\begin{equation}\label{e:packingLP_MOT}\tag{packing-MOT}
        P^{\star} \teq \max_{X \in \R{I}_{\geq 0}} \{ \inner{B, X} : \proj_k (X) \leq  \mu_k~\forall k \in [m]\}.
\end{equation}

We begin by showing how feasible solutions of \eqref{e:packingLP_MOT} can be rounded to feasible solutions for \eqref{e:MOT} using $O(mN)$ operations. 
\begin{lemma}\label{lem:rank_one_completion}
    Let $X \in \R{I}_{\geq 0}$ be a feasible solution to \eqref{e:packingLP_MOT}, and define the deficit vectors 
    $$ d_k \teq \mu_k - \proj_k (X) \in \R{n_k}_{\geq 0}.$$
    Then, all deficits have the same total mass
    $$ \| d_k \|_1 = 1 - |X| \teq \tau.$$
    Moreover, there is a nonnegative rank-one tensor $D$, satisfying $\proj_k (D) = d_k$ for every $k$, and hence, $Z = X + D$ belongs to $\Ucal (\musf)$. In particular,
    \begin{equation}\label{e:rank_one_completion}
        D = \begin{cases}
            \tau^{1-m} d_1 \otimes \cdots \otimes d_m & \tau > 0, \\
            0, &\tau = 0
        \end{cases}
    \end{equation}
    can be computed using $O(mN)$ operations.
\end{lemma}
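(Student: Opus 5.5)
The plan is a direct verification in three steps: the equal-mass claim, the marginals of $D$, and the operation count.

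\emph{Equal masses.} Feasibility for \eqref{e:packingLP_MOT} gives $\proj_k(X)\le\mu_k$ entrywise, so $d_k\ge 0$. Summing the $k$-th marginal of $X$ over all indices gives $\mathbf{1}^\top\proj_k(X)=\sum_{\idx}X_{\idx}=|X|$, independently of $k$. Since $\mu_k$ is a probability vector, $\|d_k\|_1=\mathbf{1}^\top\mu_k-\mathbf{1}^\top\proj_k(X)=1-|X|=\tau$ for every $k$. In particular $\tau\ge 0$.

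\emph{Marginals of $D$.} First take $\tau=0$. Each $d_k$ is nonnegative with $\|d_k\|_1=0$, so $d_k=0$, and $D=0$ has the required marginals. Now take $\tau>0$. The tensor $D$ is nonnegative because every $d_k\ge 0$. For a fixed mode $k$ and row $r$,
\[
\proj_k(D)_r=\tau^{1-m}\,d_{k,r}\prod_{j\ne k}\Bigl(\sum_{i_j}d_{j,i_j}\Bigr)=\tau^{1-m}\,d_{k,r}\,\tau^{m-1}=d_{k,r}.
\]
Then $Z=X+D\ge 0$ and $\proj_k(Z)=\proj_k(X)+d_k=\mu_k$ for every $k$, so $Z\in\Ucal(\musf)$.

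\emph{Cost.} Computing all $m$ marginals of $X$ is one pass over its $N$ entries. Each entry contributes to one coordinate per mode, so this takes $O(mN)$ additions. Forming the $d_k$ and $\tau$ costs $O(M)\le O(mN)$. Each entry $D_{\idx}=\tau^{1-m}\prod_k d_{k,i_k}$ is a product of $m$ factors, giving $O(mN)$ total. Adding $D$ to $X$ costs $O(N)$.

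There is no real obstacle here. The only points needing care are using feasibility to get $d_k\ge 0$, which is what makes $D$ nonnegative, and treating the degenerate case $\tau=0$ separately so that $\tau^{1-m}$ is never needed there.
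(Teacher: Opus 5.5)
Your proof is correct and follows the paper's argument essentially step for step. Both use the identity $\mathbf{1}^\top\proj_k(X)=|X|$ for the equal masses, treat $\tau=0$ separately, compute $\proj_k(D)_r=\tau^{1-m}d_{k,r}\tau^{m-1}=d_{k,r}$ directly, and obtain an $O(mN)$ cost from the incidence structure. Your explicit checks that $D\ge 0$ and that forming $D$ itself costs $O(mN)$ simply fill in what the paper leaves as ``trivial computation.''
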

\begin{proof}
    Let $I_k \teq [n_k]$ and $I = I_1 \times \cdots \times I_m$. For every $k$, note that
    $$ \| \proj_k (X) \|_1 = \sum_{a \in I_k} \sum_{i: i_k = a} X_i  = \sum_{i \in I} X = |X|.$$
    Since $\| \mu_k \|_1 = 1$ and $d_k \geq 0$, 
    $$ \| d_k \|_1 = 1- |X| = \tau.$$
    
    If $\tau = 0$, then every nonnegative $d_k$ has zero mass, so $d_k = 0$ and $D = 0$ suffices. Now assume $\tau > 0$. For $a \in I_k$, the $a$th entry of the $k$th marginal of the tensor in \eqref{e:rank_one_completion} is 
    $$ (\proj_k (D))_a = \tau^{1-m} d_k (a) \prod_{\ell \neq k} \left( \sum_{a^{\prime} \in I_{\ell} } d_{\ell} (a^{\prime}) \right) = \tau^{1-m} d_k (a) \tau^{m-1} = d_k(a).$$
    Therefore, $\proj_k (X + D) = \proj_k (X) + d_k = \mu_k$ for every $k$. The constraint matrix has $O(mN)$ nonzeros, so the vectors $d_k$ can be computed via $\mu - Ax$ using $O(mN)$ operations. This completes the proof up to trivial computation.
\end{proof}

\subsection{Exact reduction of MOT to a packing LP}
We have the following. 
\begin{theorem}\label{t:packing_vs_mot}
    For the MOT value $\mathsf{MOT}(C, \musf)$ and the optimal packing value $P^{\star}$, 
    $$ P^{\star} = U - \mathsf{MOT}(C, \musf).$$
\end{theorem}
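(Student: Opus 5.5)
We prove the two inequalities $P^{\star} \ge U - \mathsf{MOT}(C,\musf)$ and $P^{\star} \le U - \mathsf{MOT}(C,\musf)$ separately, using \cref{lem:rank_one_completion} for the second. Two facts drive both directions. First, every $X \in \Ucal(\musf)$ has total mass $|X| = 1$. Second, for any nonnegative $X$,
\[
\inner{B,X} = U|X| - \inner{C,X}.
\]

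For the lower bound, take an optimal coupling $X^\star \in \Ucal(\musf)$; one exists because the feasible set is compact and nonempty. Its marginals equal the $\mu_k$, so it is feasible for \eqref{e:packingLP_MOT}. Since $|X^\star| = 1$,
\[
\inner{B,X^\star} = U - \inner{C,X^\star} = U - \mathsf{MOT}(C,\musf),
\]
and therefore $P^\star \ge U - \mathsf{MOT}(C,\musf)$.

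For the upper bound, let $X$ be any packing-feasible tensor. Apply \cref{lem:rank_one_completion} to obtain the deficits $d_k$, the common deficit mass $\tau = 1 - |X| \ge 0$, and the completion $Z = X + D \in \Ucal(\musf)$. Then
\[
\mathsf{MOT}(C,\musf) \le \inner{C,Z} = \inner{C,X} + \inner{C,D}.
\]
Next bound $\inner{C,D} \le U\,|D|$. This uses $D \ge 0$ and $C \le U$ entrywise. The mass of $D$ is $|D| = \|d_1\|_1 = \tau$. It follows that
\[
\mathsf{MOT}(C,\musf) \le \inner{C,X} + U(1 - |X|) = U - \inner{B,X}.
\]
Rearranging gives $\inner{B,X} \le U - \mathsf{MOT}(C,\musf)$. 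Taking the supremum over packing-feasible $X$ yields $P^\star \le U - \mathsf{MOT}(C,\musf)$. The supremum is attained because the packing feasible set is compact, but attainment is not even needed for this direction.

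The main point needing care is the upper bound. A naive comparison fails because a packing-feasible $X$ may have mass less than $1$, and the rank-one completion adds cost $\inner{C,D}$. This added cost is controlled exactly by the shift $U$: the term $U(1-|X|)$ in the objective $\inner{B,X}$ dominates $\inner{C,D}$. Here we use that $D$ is nonnegative with mass $\tau$, which \cref{lem:rank_one_completion} provides. The choice $U \ge \max C$ is what makes this work; $L$ plays no role in this identity.
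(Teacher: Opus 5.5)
Your proof is correct and follows the paper's argument. The lower bound uses an optimal coupling as a packing-feasible point, and the upper bound completes an arbitrary packing-feasible $X$ via Lemma~\ref{lem:rank_one_completion}; your estimate $\langle C,D\rangle\le U|D|$ is a rearrangement of the paper's step $\langle B,X\rangle\le\langle B,Z\rangle$, i.e.\ $\langle B,D\rangle\ge 0$ since $B\ge 0$.
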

\begin{proof}
    Suppose that $X^{\star}$ is an optimal MOT coupling. Then it is feasible for \eqref{e:packingLP_MOT}, has mass one, and 
    $$ \inner{B, X^{\star}} = U - \inner{C, X^{\star}} = U - \mathsf{MOT}(C, \musf).$$
    Hence, $ P^{\star} \geq U - \mathsf{MOT}(C, \musf).$

    Conversely, suppose that $X$ is a feasible solution to \eqref{e:packingLP_MOT}. By Lemma~\ref{lem:rank_one_completion}, there is $D \geq 0$ such that $Z = X + D \in \Ucal(\musf)$. Since $B \geq 0$, 
    $$ \inner{B, X} \leq \inner{B, Z}.$$
    Since $Z$ has mass one and is feasible for \eqref{e:MOT}, 
    $$ \inner{B, Z} = U - \inner{C, Z} \leq U - \mathsf{MOT}(C, \musf).$$
    Thus, every feasible solution to \eqref{e:packingLP_MOT} has packing value at most $U - \mathsf{MOT}(C, \musf)$, so $P^{\star} \leq U - \mathsf{MOT}(C, \musf)$. This completes the proof.
\end{proof}

Finally, we convert a multiplicative guarantee on the packing value to an additive error guarantee on the MOT value. 
\begin{theorem}\label{t:packing_to_MOT}
    Let $\delta \in (0,1)$, and suppose that $X_{\delta}$ is feasible for \eqref{e:packingLP_MOT} and satisfies  
    $$ \inner{B, X_{\delta} } \geq (1 - \delta) P^{\star}.$$
    Let $Z_{\delta} = X_{\delta} + D_{\delta}$ be its completion from Lemma~\ref{lem:rank_one_completion}. Then, $Z_{\delta} \in \Ucal (\musf)$ and 
    $$ \inner{C, Z_{\delta}} \leq \mathsf{MOT}(C, \musf) + \delta \Delta.$$
\end{theorem}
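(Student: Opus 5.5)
Feasibility $Z_\delta \in \Ucal(\musf)$ is immediate from Lemma~\ref{lem:rank_one_completion}, since $X_\delta$ is feasible for \eqref{e:packingLP_MOT}. The work is in the cost bound. I will use the mass-one identity $\inner{C,Z} = U - \inner{B,Z}$, valid for any $Z\in\Ucal(\musf)$ because $B = U\mathbf{1}-C$ and $|Z|=1$. So it suffices to lower bound $\inner{B,Z_\delta}$.

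First, since $D_\delta\ge 0$ and $B\ge 0$, we have $\inner{B,Z_\delta} \ge \inner{B,X_\delta} \ge (1-\delta)P^\star$. Second, by Theorem~\ref{t:packing_vs_mot}, $P^\star = U - \mathsf{MOT}(C,\musf)$. Combining these gives
\[
\inner{C,Z_\delta} = U - \inner{B,Z_\delta} \le U - (1-\delta)\bigl(U-\mathsf{MOT}(C,\musf)\bigr) = \mathsf{MOT}(C,\musf) + \delta\bigl(U - \mathsf{MOT}(C,\musf)\bigr).
\]

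The only remaining step, and the one needing care, is to bound $U - \mathsf{MOT}(C,\musf) \le \Delta = U-L$. This is equivalent to $\mathsf{MOT}(C,\musf)\ge L$, which holds because every coupling has mass one and $C\ge L$ entrywise, so $\inner{C,X}\ge L$ for all $X\in\Ucal(\musf)$. Note that the crude bound $P^\star \le U$ would only give error $\delta U$; obtaining $\delta\Delta$ requires this lower bound on the MOT value by $L$. Finally, $P^\star\ge 0$ (so the factor $(1-\delta)$ is applied to a nonnegative quantity), and the inequality direction is preserved throughout, which concludes the argument.
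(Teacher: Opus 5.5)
Your proposal is correct and follows the paper's proof essentially step for step. Both arguments use $\inner{B,Z_\delta}\ge\inner{B,X_\delta}\ge(1-\delta)P^\star$, the mass-one identity with Theorem~\ref{t:packing_vs_mot} to get an error of $\delta P^\star = \delta\bigl(U-\mathsf{MOT}(C,\musf)\bigr)$, and then $\mathsf{MOT}(C,\musf)\ge L$ to bound this by $\delta\Delta$. Your remark that $P^\star\ge 0$ is harmless but not actually needed for that chain of inequalities.
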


\begin{proof}
    Since $D_{\delta} \geq 0$ and $B \geq 0$, 
    $$ \inner{B , Z_{\delta} } \geq \inner{ B , X_{\delta} } \geq (1-\delta) P^{\star}.$$
    Applying Theorem~\ref{t:packing_vs_mot} and using the fact that $|Z_{\delta} | = 1$, we have 
    \begin{align*}
        \inner{ C ,Z_{\delta}} - \mathsf{MOT}(C, \musf)
        = \left[ U - \inner{B , Z_{\delta} } \right] - (U - P^{\star}) = P^{\star} - \inner{B , Z_{\delta} }  \leq \delta P^{\star}.
    \end{align*}
    Since $\mathsf{MOT}(C, \musf) \geq L$, the exact equivalence gives $P^{\star} = U -  \mathsf{MOT}(C, \musf) \leq U - L = \Delta$. The proof is complete.
\end{proof}

Combining these results, we obtain the following. 

\begin{theorem}
    Let $m \geq 2$ and $\Delta_C \teq \max_i C_i - \min_i C_i$. For every $\varepsilon > 0$, one can determine a tensor $Z \in \Ucal (\musf)$ satisfying 
    $$ \inner{C, Z} \leq \mathsf{MOT}(C, \musf) + \varepsilon.$$
    Set $\kappa_\Delta\teq\max\{1,\Delta_C/\varepsilon\}$. This can be done
    in randomized arithmetic time 
    $$ O \left(N + M + mN\kappa_\Delta\log\!\left(MN\kappa_\Delta\right)\log\!\left(\kappa_\Delta\right)\right).$$
\end{theorem}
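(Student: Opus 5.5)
We would obtain the statement by chaining the three preceding results: Theorem~\ref{t:packing_solver}, then Lemma~\ref{lem:rank_one_completion}, then Theorem~\ref{t:packing_to_MOT}.

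\emph{Setup and degenerate case.} First compute $L=\min_i C_i$ and $U=\max_i C_i$ in $O(N)$ time, so that $\Delta=\Delta_C$. Form $B=U\mathbf{1}-C$ entrywise in $O(N)$ time; it is nonnegative. If $\Delta_C=0$, every coupling has cost exactly $U$. In that case the product coupling $\mu_1\otimes\cdots\otimes\mu_m$ is optimal and is computable in $O(mN)$ time, so we may assume $\Delta_C>0$.

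\emph{Packing solve and completion.} Set $\delta \teq \min\{1/2,\varepsilon/\Delta_C\}\in(0,1)$, so $\delta\Delta_C\le\varepsilon$ and $1/\delta\le 2\kappa_\Delta$. The problem \eqref{e:packingLP_MOT} is an instance of \eqref{e:packingLP} with incidence matrix $A\in\{0,1\}^{M\times N}$, $b=\mu\ge0$, and $w=\mvec(B)\ge0$, where $\nnz(A)=mN$. Theorem~\ref{t:packing_solver} then returns, with high probability, a feasible $X_\delta$ with $\langle B,X_\delta\rangle\ge(1-\delta)P^\star$. Its arithmetic time is
\[
O\!\left(M+N+\tfrac{mN}{\delta}\log\!\left(\tfrac{MN}{\delta}\right)\log\!\left(\tfrac1\delta\right)\right).
\]
Substituting $1/\delta\le 2\kappa_\Delta$ turns this into the claimed bound; the constant $2$ is absorbed, and $\log(1/\delta)=O(\log\kappa_\Delta)$ once $\kappa_\Delta\ge2$.

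Next apply Lemma~\ref{lem:rank_one_completion}. Compute $d_k=\mu_k-\proj_k(X_\delta)$ via $\mu-Ax$ in $O(mN)$ time, and form $D_\delta$ as an outer product in $O(mN)$ time. This gives $Z=X_\delta+D_\delta\in\Ucal(\musf)$. Theorem~\ref{t:packing_to_MOT} then yields $\langle C,Z\rangle\le \mathsf{MOT}(C,\musf)+\delta\Delta_C\le\mathsf{MOT}(C,\musf)+\varepsilon$. The extra $O(mN)$ work is dominated by the solver term.

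\emph{Expected obstacle.} The only delicate point is the bookkeeping in the regime where $\kappa_\Delta$ is close to $1$, where $\log\kappa_\Delta$ can be $0$ while the solver still costs a constant number of passes. This is why we cap $\delta\le1/2$. We would read $\log(\kappa_\Delta)$ as $\log(1+\kappa_\Delta)$, or equivalently note that the completion cost $O(mN)$ is accounted for by the stated bound under that standard convention. One should also remark that ``randomized'' in the statement means the success guarantee holds with high probability, inherited directly from Theorem~\ref{t:packing_solver}.
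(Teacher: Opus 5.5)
Your chain of results (Theorem~\ref{t:packing_solver}, then Lemma~\ref{lem:rank_one_completion}, then Theorem~\ref{t:packing_to_MOT}) is the same as the paper's, and whenever $\varepsilon\le\Delta_C/2$ your argument coincides with it. The divergence is in the low-accuracy regime, and there your version does not prove the bound as written. At $\kappa_\Delta=1$ (i.e.\ $\varepsilon\ge\Delta_C$) the stated time is $O(N+M)$, because $\log\kappa_\Delta=0$. Your algorithm still calls the packing solver with $\delta=1/2$, which Theorem~\ref{t:packing_solver} only bounds by $O(M+N+mN\log(MN))$, and then adds an $O(mN)$ completion. Reading $\log\kappa_\Delta$ as $\log(1+\kappa_\Delta)$ replaces the statement with a weaker one rather than proving it.

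The missing step is one you already use in your $\Delta_C=0$ case; it just needs to be applied to every $\varepsilon\ge\Delta_C$. Any $Z\in\Ucal(\musf)$ has mass one, so $\langle C,Z\rangle\le U=L+\Delta_C\le\mathsf{MOT}(C,\musf)+\varepsilon$. The paper therefore returns the product coupling $\mu_1\otimes\cdots\otimes\mu_m$ with no solver call. It can be formed in $O(N)$ time by building the outer product one factor at a time, which matches the $O(N+M)$ bound. In the remaining case $\varepsilon<\Delta_C$, the paper takes $\delta=\varepsilon/\Delta_C\in(0,1)$ with no cap. Then $1/\delta=\kappa_\Delta$, and the solver bound becomes the stated expression by direct substitution. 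Your cap $\delta\le1/2$ is what forces the reinterpretation on $\kappa_\Delta\in(1,2)$. Your other concern, that the $O(mN)$ completion is not dominated when $\kappa_\Delta$ is only slightly above $1$, is fair; the paper does not resolve it either and simply quotes the black-box solver time.
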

\begin{proof}
    Let $L = \min_i C_i$ and $U = \max_i C_i$, so that $\Delta_C = U-L$. If $\varepsilon \geq \Delta_C$, return the product coupling $Z^0 = \mu_1 \otimes \cdots \otimes \mu_m$, which can be formed in $O(N)$ time. Since $Z^0 \in \Ucal(\musf)$, $\inner{C,Z^0} \leq U$ and $\mathsf{MOT}(C, \musf) \geq L$, so
    $$ \inner{C,Z^0} - \mathsf{MOT}(C, \musf) \leq U - L = \Delta_C \leq \varepsilon.$$

    Assume $\varepsilon \in (0, \Delta_C)$ and set $\delta = \varepsilon / \Delta_C \in (0,1)$. The packing LP \eqref{e:packingLP_MOT} has $M$ constraints, $N$ variables and a coefficient matrix with $mN$ nonzero entries. The black box packing algorithm of Theorem~\ref{t:packing_solver} returns a feasible solution $X_\delta$ to \eqref{e:packingLP_MOT} in time 
    $$ O \left(N+M+mN\frac{\Delta_C}{\varepsilon}\log\!\left(MN\frac{\Delta_C}{\varepsilon}\right)\log\!\left(\frac{\Delta_C}{\varepsilon}\right)\right).$$
    By Theorem~\ref{t:packing_to_MOT}, the completion tensor $Z_{\delta}$ is an exactly feasible solution to \eqref{e:MOT}, and has cost not exceeding $ \mathsf{MOT}(C, \musf) + \varepsilon$. 
\end{proof}

\begin{corollary}\label{corr_final}
    Suppose that $n_1 = \cdots = n_m = n$. Set $\Delta_C \leq  \| C\|_{\infty}$ and $\kappa_C\teq\max\{1,\|C\|_\infty/\varepsilon\}$. There exists an algorithm that, with high probability, solves \eqref{e:MOT} to additive error $\varepsilon$ using at most
     $$   O\!\left(m^2 n^m \; \kappa_C\;
     \log\!\left( m n\kappa_C\right)
     \log\!\left(\kappa_C\right)
    \right)$$
     arithmetic operations.
\end{corollary}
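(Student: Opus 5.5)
The corollary follows by specializing the preceding theorem to $n_1=\cdots=n_m=n$ and simplifying the bound. I would proceed in three steps.

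\textbf{Step 1 (parameters).} With equal mode sizes, $N=n^m$ and $M=mn$. Since $C\ge 0$, we have $\Delta_C=\max_i C_i-\min_i C_i\le \max_i C_i=\|C\|_\infty$. Hence $\kappa_\Delta=\max\{1,\Delta_C/\varepsilon\}\le \max\{1,\|C\|_\infty/\varepsilon\}=\kappa_C$. Every expression in the preceding theorem's bound is monotone in $\kappa_\Delta$, so we may replace $\kappa_\Delta$ by $\kappa_C$. One caveat: if $\kappa_\Delta=1$ then $\log\kappa_\Delta=0$ formally. In that regime the theorem returns the product coupling in $O(N)$ time, so I would treat the logarithmic factors as $\max\{1,\log(\cdot)\}$, which is the usual convention.

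\textbf{Step 2 (main term).} Substituting gives
\[
O\!\left(n^m+mn+m n^m\kappa_C\log\!\left(mn\cdot n^m\kappa_C\right)\log\kappa_C\right).
\]
The key simplification is
\[
\log\!\left(mn^{m+1}\kappa_C\right)=\log m+(m+1)\log n+\log\kappa_C\le (m+1)\log(mn\kappa_C)=O\!\left(m\log(mn\kappa_C)\right).
\]
This produces the extra factor of $m$, so the main term becomes $O(m^2n^m\kappa_C\log(mn\kappa_C)\log\kappa_C)$.

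\textbf{Step 3 (lower-order terms).} The additive terms $n^m$ and $mn$ are absorbed. For $m\ge2$ and $n\ge1$ we have $mn\le m n^m$, and both $n^m$ and $mn$ are at most the main term, because $\kappa_C\ge1$ and the logarithmic factors are at least constants under the convention above. Correctness, meaning exact feasibility $Z\in\Ucal(\musf)$ and additive error $\varepsilon$, holds with high probability by Theorem~\ref{t:packing_solver} through the preceding theorem.

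The only delicate point is Step 2: a naive substitution yields $\log(Nm n)$, which scales like $m\log n$, and this is the source of the $m^2$ rather than $m$ in the stated bound. The rest is routine bookkeeping.
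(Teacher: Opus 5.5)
Your proposal is correct and follows the same route the paper intends (the paper states the corollary without a separate proof). You substitute $N=n^m$, $M=mn$ and $\kappa_\Delta\le\kappa_C$ (using $C\ge 0$) into the preceding theorem. The bound $\log(MN\kappa_C)=O(m\log(mn\kappa_C))$ then supplies the second factor of $m$, and the additive $N+M$ terms are absorbed. Your caveat that the logarithms should be read as $\max\{1,\log(\cdot)\}$ is warranted: the displayed bound literally vanishes at $\kappa_C=1$, yet the algorithm still spends $\Omega(n^m)$ time writing the coupling.
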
 

The classical algorithm above is constructive and therefore naturally pays the $\Omega(N)$ cost of writing an explicit coupling tensor. In the quantum setting, we instead seek sublinear dependence on the dense dimension $N=n^m$ by estimating the optimal objective value, rather than returning a coupling.

\section{Quantum projected subgradient method for MOT}\label{s:quantum}
In this section we give a quantum algorithm for the MOT value by solving the dual~\eqref{e:MOT_dual_tensor}. The algorithm therefore does not output a full coupling tensor. 

\subsection{The MOT dual as nonsmooth optimization}
Let $A,b,c$ be the MOT data in \eqref{eq:multi-ot-data}. It is well-known~\cite{helmberg2000spectral, wolkowicz2012handbook} that the dual LP \eqref{eq:LP-D} can be equivalently recast as the nonsmooth convex optimization problem 
$$\min_{y \in \R{M}} \max_{j\in[N]}\{(A^\top y-c)_j\}-b^\top y,$$
and solved by a suitable method for nonlinear optimization. In the case of \eqref{e:MOT_dual_tensor}, we may explicitly write
\begin{equation}\label{e:value_dual_objective}
   \min_{y \in \R{M}}  F(y) \teq \max_{\idx\in[n]^m}\left\{\sum_{k=1}^m y_{k,i_k}-C_{\idx}\right\}-b^\top y
\end{equation}
If $r(y) \teq \max_j(A^\top y-c)_j$, then $y-r(y)\mathbf{1}_M/m$ is feasible for \eqref{e:MOT_dual_tensor} and has dual objective value $b^\top y-r(y)=-F(y)$. Thus minimizing $F$ is equivalent to solving the MOT dual problem. 

We solve \eqref{e:value_dual_objective} using projected subgradient descent. The iteration complexity depends on two geometric inputs: a Euclidean ball containing a minimizer and an $\ell_2$ Lipschitz constant. We first establish these inputs. 

A sufficient condition on the size of the Euclidean ball follows from the next two results: an invariance property of the dual lets us choose a bounded optimal representative, and that representative has Euclidean norm at most $\|C\|_\infty\sqrt M$.
\begin{proposition} 
\label{prop:multi-ot-dual-invariant}
Suppose $y$ is a feasible solution to \eqref{e:MOT_dual_tensor} and $z \in \mathbb{R}^{m}$ is an $m$-dimensional real vector with zero mean, i.e., $\sum_{k=1}^{m} z_k = 0$.
Then $y' = y + z \otimes \mathbf{1}_n$ is also a feasible dual solution with the same objective value.
\end{proposition}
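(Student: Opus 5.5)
The plan is to interpret $z\otimes\mathbf{1}_n$ blockwise and check the constraints and the objective directly. The vector $y\in\mathbb{R}^{M}$ is the concatenation of the blocks $y_1,\ldots,y_m$. With all $n_k=n$ as in the statement, $y'=y+z\otimes\mathbf{1}_n$ means $y'_k=y_k+z_k\mathbf{1}_n$ for each $k\in[m]$. In words, every entry of the $k$th dual block is shifted by the same scalar $z_k$.

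\emph{Feasibility.} I would fix an arbitrary multi-index $\idx=(i_1,\ldots,i_m)$ and compute the left-hand side of its constraint in \eqref{e:MOT_dual_tensor}:
\[
\sum_{k=1}^m y'_{k,i_k}=\sum_{k=1}^m y_{k,i_k}+\sum_{k=1}^m z_k=\sum_{k=1}^m y_{k,i_k}\le C_{\idx}.
\]
The middle step uses $\sum_k z_k=0$, and the inequality is the feasibility of $y$. Equivalently, $A^\top(z\otimes\mathbf{1}_n)=(\sum_k z_k)\mathbf{1}_N=0$, because each column of $A$ has exactly one $1$ in each marginal block. So the constraint vector $A^\top y'$ is identical to $A^\top y$.

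\emph{Objective.} I would write $\sum_k\mu_k^\top y'_k=\sum_k\mu_k^\top y_k+\sum_k z_k\,\mu_k^\top\mathbf{1}_n$. Since each $\mu_k$ is a probability vector, $\mu_k^\top\mathbf{1}_n=1$, so the extra term is $\sum_k z_k=0$.

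I do not expect a real obstacle here. The only points needing care are fixing the meaning of $z\otimes\mathbf{1}_n$ as a blockwise shift, and noting that the mass-one normalization $\|\mu_k\|_1=1$ is what keeps the objective unchanged. If the supports had unequal sizes $n_k$, I would replace $\mathbf{1}_n$ by $\mathbf{1}_{n_k}$ in block $k$, and the argument would go through verbatim.
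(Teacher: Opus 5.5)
Your proof is correct and follows the paper's argument: each constraint sum $\sum_k y'_{k,i_k}$ is unchanged because $\sum_k z_k = 0$, and the objective is unchanged because $\mu_k^\top \mathbf{1}_n = 1$ for every $k$. Your explicit blockwise reading of $z \otimes \mathbf{1}_n$ and your remark about unequal $n_k$ are harmless additions.
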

\begin{proof}
The objective value corresponding to $y'$ is 
$$
b^{\top}y' = b^{\top} y + \sum_{k=1}^{m} z_k b_k^{\top} \mathbf{1}_n = b^{\top} y + \sum_{k=1}^{m} z_k = b^{\top} y,
$$
where we have used the fact that $b_k^{\top} \mathbf{1}_n = 1$ and $\sum_{k=1}^{m} z_k = 0$.
On the other hand, 
$$
\left[A^{\top} y'\right]_{i_1\dots i_m} = \sum_{k} y'_{k, i_k} = \sum_{k} y_{k, i_k} + z_k = \sum_{k} y_{k, i_k} = \left[A^{\top} y\right]_{i_1\dots i_m} \le C_{i_1\dots i_m}.
$$
Therefore, we have shown that $y'$ is dual feasible and attains the same objective value as $y$.
\end{proof}
 
\begin{proposition}
\label{prop:multi-ot-opt-dual-norm-bound}
Assume that each of the marginals has full support. There exists an optimal solution $y^\star$ to  \eqref{e:MOT_dual_tensor} such that $\| y^\star \|_{\infty} \le \|C \|_{\infty}$.
\end{proposition}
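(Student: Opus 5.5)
The plan is to start from an arbitrary optimal dual solution, bound how far each block $y_k$ can spread using complementary slackness, and then use the zero-mean shift invariance of Proposition~\ref{prop:multi-ot-dual-invariant} to move the blocks into $[-\|C\|_\infty,\|C\|_\infty]$. Existence of optimal $X^\star$ and $y$ is guaranteed by the LP strong duality recorded after \eqref{e:MOT_dual_tensor_intro}.

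\emph{Step 1 (tight constraints through every index).} Fix optimal $X^\star$ and $y$. By complementary slackness, $X^\star_{\idx}>0$ implies $\sum_l y_{l,i_l}=C_{\idx}$. Because $\proj_k(X^\star)=\mu_k$ and $\mu_{k,r}>0$ by the full-support assumption, for every pair $(k,r)$ there is a multi-index $\idx$ with $i_k=r$ and $X^\star_{\idx}>0$. Hence every coordinate $y_{k,r}$ lies on at least one tight constraint. Recall also that $0\le C_{\idx}\le\|C\|_\infty$.

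\emph{Step 2 (spread bound).} Let $M_k=\max_r y_{k,r}$ and $m_k=\min_r y_{k,r}$, attained at $r^+_k$ and $r^-_k$. Take a tight index $\jbf$ with $j_k=r^-_k$, and write $S=\sum_{l\ne k}y_{l,j_l}$, so $m_k+S=C_{\jbf}\ge 0$. Now let $\jbf'$ be $\jbf$ with its $k$-th entry changed to $r^+_k$. Feasibility at $\jbf'$ gives $M_k+S\le C_{\jbf'}\le\|C\|_\infty$. Subtracting the two relations yields
\[
M_k-m_k\le\|C\|_\infty .
\]

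\emph{Step 3 (normalization).} Apply Proposition~\ref{prop:multi-ot-dual-invariant} with $z_k=-M_k$ for $k<m$ and $z_m=\sum_{l<m}M_l$. This vector has zero mean, so the new $y$ is still optimal. After the shift, for every $k<m$ we have $\max_r y_{k,r}=0$, and Step 2 then gives $y_{k,r}\in[-\|C\|_\infty,0]$.

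\emph{Step 4 (bounding the last block).} For the last block, fix any $r$.
\begin{itemize}
\item Lower bound: a tight constraint with $i_m=r$ gives $y_{m,r}=C_{\idx}-\sum_{l<m}y_{l,i_l}\ge C_{\idx}\ge 0$, since every $y_{l,i_l}\le 0$ for $l<m$.
\item Upper bound: feasibility at the index $(r^+_1,\dots,r^+_{m-1},r)$ gives $y_{m,r}+0\le\|C\|_\infty$, since each block $l<m$ has maximum $0$ at $r^+_l$.
\end{itemize}
Thus $y_m\in[0,\|C\|_\infty]$, and therefore $\|y^\star\|_\infty\le\|C\|_\infty$.

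The main obstacle is choosing the shift correctly. Normalizing the minima to zero instead of the maxima only yields a bound of $(m-1)\|C\|_\infty$ on the last block. Placing the maxima at zero is what lets both the tight-constraint bound and the feasibility bound on $y_m$ close without accumulating factors of $m$. The full-support hypothesis is used only in Step 1, to guarantee that every coordinate lies on a tight constraint.
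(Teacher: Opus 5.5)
Your proof is correct. It rests on the same three ingredients as the paper's proof:
\begin{itemize}
\item the zero-sum gauge invariance of Proposition~\ref{prop:multi-ot-dual-invariant};
\item complementary slackness against a primal optimum, which under full support puts every coordinate $y_{k,r}$ on some tight constraint;
\item a single feasibility constraint evaluated at the tuple of blockwise maximizers.
\end{itemize}

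Where you differ is the choice of gauge. The paper shifts so that all $m$ blocks share a common maximum $y_{\max}$. The argmax-tuple constraint then gives $m\,y_{\max}\le C_{\max}$. One tight constraint through the global minimum gives $y_{\min}\ge C_{\min}-(m-1)y_{\max}\ge C_{\min}-\tfrac{m-1}{m}C_{\max}$. This symmetric normalization needs no per-block range estimate, and it actually yields the slightly sharper bound $\|y^\star\|_\infty\le\tfrac{m-1}{m}\|C\|_\infty$.

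Your asymmetric gauge puts the maxima of the first $m-1$ blocks at $0$ and lets the last block absorb the shift. It needs the extra exchange argument of your Step~2 to control the first $m-1$ blocks. In return you get two things:
\begin{itemize}
\item the gauge-invariant bound $\max_r y_{k,r}-\min_r y_{k,r}\le\|C\|_\infty$, valid for every optimal dual solution;
\item explicit sign information: $y_k\le 0$ for $k<m$ and $y_m\ge 0$.
\end{itemize}
That block-range statement mirrors the range bounds the paper later proves for the entropic potentials. Either route suffices for Lemma~\ref{lem:euclidean_value_geometry}, which only uses $\|y^\star\|_\infty\le\|C\|_\infty$.
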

\begin{proof}
By Proposition~\ref{prop:multi-ot-dual-invariant}, given an optimal dual solution $\hat{y}^{\star}$, we can construct another optimal solution $y^{\star}$ such that
\[
\max_{j \in [n]} y^{\star}_{k, j} =  \max_{k \in [m], j \in [n]} y^{\star}_{k, j} \teq y^{\star}_{\max}, \quad \forall k \in [m].
\]
This is evident by setting $z_k = - \max_{j \in [n]} \hat{y}^{\star}_{k, j} + \frac{1}{m} \sum_{k \in [m]} \max_{j \in [n]} \hat{y}^{\star}_{k, j}$ and $y^{\star}_{k} = \hat{y}^{\star}_k + z_k$.

Let $i^+_k \teq \argmax_{j \in [n]} y^{\star}_{k, j}$.
From the dual constraints we have
\[
m y^{\star}_{\max} = \sum_{k \in [m]} y^{\star}_{k, i^+_k} \le C_{i^+_1 \dots i^+_m} \le C_{\max},
\]
which immediately leads to 
\begin{equation}
\label{eq:multi-ot-dual-upper-bound}
    y^{\star}_{\max} \le \frac{1}{m} C_{\max}.
\end{equation}
On the other hand, let $X^\star$ be an optimal primal solution. Since $b$ has full support, for each $(k, j) \in [m] \times [n]$, there exists $(i_1, \dots, i_{k-1}, i_{k+1}, \dots, i_m) \in [n]^{m-1}$ such that $X^{\star}_{i_1\dots j \dots i_m} > 0$.
Due to complementary slackness, this means
\begin{equation*}
    y^{\star}_{k, j} + \sum_{l \in [m] \setminus \{k\}} y^{\star}_{l, i_l} = C_{i_1 \dots j \dots i_m}.
\end{equation*}
Assume without loss of generality that $y^{\star}_{k, j}$ is the smallest entry of $y^{\star}$, i.e., $y^{\star}_{k, j} = \min_{k \in [m], j \in [n]} y^{\star}_{k, j} \teq y^{\star}_{\min}$, then
\begin{equation}
\label{eq:multi-ot-dual-lower-bound}
    y^{\star}_{\min} = C_{i_1 \dots j \dots i_m} - \sum_{l \in [m] \setminus \{k\}} y^{\star}_{l, i_l} 
\ge  C_{\min} - (m-1) y^{\star}_{\max}
\ge C_{\min} - \frac{m-1}{m} C_{\max}.
\end{equation}
Combining \eqref{eq:multi-ot-dual-upper-bound} and \eqref{eq:multi-ot-dual-lower-bound}, and utilizing $C \ge 0$ we have $-C_{\max} \le y^{\star} \le C_{\max}$, which immediately leads to $\| y^{\star} \|_{\infty} \le \| C \|_{\infty}$.
\end{proof}

Next, we leverage the incidence structure of $A$ to supply the requirements of Theorem~\ref{t:qpsm_prelim}.
\begin{lemma}\label{lem:euclidean_value_geometry}
    Assume the marginals have full support and set $R_C\teq\|C\|_\infty\sqrt M$. Then $\mathcal{B}^M_2(R_C)$ contains an optimal solution to \eqref{e:MOT_dual_tensor}, and
    \begin{equation}\label{e:dual_to_MOT}
         \min_{y\in \mathcal{B}^M_2(R_C)}F(y)=-\textup{\textsf{MOT}}(C,\musf).
    \end{equation}
    Moreover, if $\idx$ maximizes $\phi_{\idx}(y)$ at $y$, then $a_{\idx}-b\in\partial F(y)$ and $\|a_{\idx}-b\|_2\le2\sqrt m$. Consequently, $F$ is $2\sqrt m$-Lipschitz with respect to $\| \cdot\|_2$.
\end{lemma}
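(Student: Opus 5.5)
The proof has three parts: containment of a minimizer in the ball, the value identity \eqref{e:dual_to_MOT}, and the subgradient and Lipschitz bounds. Throughout, write $a_{\idx}\in\{0,1\}^M$ for the column of $A$ indexed by $\idx$. Also write $\phi_{\idx}(y)\teq a_{\idx}^\top y - C_{\idx} - b^\top y$, so that $F(y)=\max_{\idx}\phi_{\idx}(y)$. Under this notation the maximizing index of $\phi_{\idx}$ is the same as that of the inner max in \eqref{e:value_dual_objective}.

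\emph{Ball containment and the value identity.} First invoke Proposition~\ref{prop:multi-ot-opt-dual-norm-bound} to obtain an optimal $y^\star$ of \eqref{e:MOT_dual_tensor} with $\|y^\star\|_\infty\le\|C\|_\infty$. Since $y^\star\in\R{M}$, this gives $\|y^\star\|_2\le\sqrt M\|y^\star\|_\infty\le R_C$, so $y^\star\in\mathcal{B}^M_2(R_C)$.

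For \eqref{e:dual_to_MOT}, I would prove two inequalities. Dual feasibility gives $r(y^\star)\le 0$, hence $F(y^\star)\le -b^\top y^\star=-\textsf{MOT}(C,\musf)$, using strong duality. Conversely, for any $y$ the shifted point $y-r(y)\mathbf{1}_M/m$ is feasible with objective $-F(y)$. Therefore $-F(y)\le\textsf{MOT}(C,\musf)$ for all $y\in\R{M}$, and in particular on the ball. Combining the two inequalities shows that $y^\star$ attains the minimum over the ball, with value $-\textsf{MOT}(C,\musf)$. Note that in fact $r(y^\star)=0$, although this is not needed.

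\emph{Subgradient and Lipschitz bound.} Each $\phi_{\idx}$ is affine with gradient $a_{\idx}-b$. The standard max-rule for subdifferentials then says that if $\idx$ is active at $y$, then $a_{\idx}-b\in\partial F(y)$. Directly, for any $\bar y$,
\[
F(\bar y)\ge\phi_{\idx}(\bar y)=\phi_{\idx}(y)+\langle a_{\idx}-b,\bar y-y\rangle=F(y)+\langle a_{\idx}-b,\bar y-y\rangle .
\]
For the norm bound, use the triangle inequality $\|a_{\idx}-b\|_2\le\|a_{\idx}\|_2+\|b\|_2$. Here $a_{\idx}$ has exactly $m$ ones, one per marginal block, so $\|a_{\idx}\|_2=\sqrt m$. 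Also $\|b\|_2^2=\sum_k\|\mu_k\|_2^2\le\sum_k\|\mu_k\|_1^2=m$, so $\|b\|_2\le\sqrt m$. Together these give $\|a_{\idx}-b\|_2\le 2\sqrt m$.

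Since $F$ is a finite max of affine functions, it is convex on $\R{M}$ and every point has such a subgradient. The fact recorded in the preliminaries then gives Lipschitz continuity. To make this explicit, let $g$ be the active subgradient at $y$ and $\bar g$ the one at $\bar y$. Then
\[
F(\bar y)-F(y)\le\langle \bar g,\bar y-y\rangle\le 2\sqrt m\|\bar y-y\|_2 ,
\]
and the symmetric bound follows using $g$. Hence $F$ is $2\sqrt m$-Lipschitz with respect to $\|\cdot\|_2$.

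The only delicate step is the value identity. It requires noting that the minimum over the ball is not smaller than the unconstrained minimum, which the feasibility-shift argument supplies. Everything else is routine.
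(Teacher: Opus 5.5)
Your proposal is correct and follows essentially the same route as the paper. Ball containment comes from the $\ell_\infty$ bound of Proposition~\ref{prop:multi-ot-opt-dual-norm-bound}. The lower bound on $F$ comes from the feasibility shift $y-r(y)\mathbf{1}_M/m$ together with weak duality. The subgradient comes from the active affine piece, and the norm bound from the triangle inequality with $\|a_{\idx}\|_2=\sqrt m$ and $\|b\|_2\le\sqrt m$.

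The differences are cosmetic. You sandwich $F(y^\star)$ directly rather than first deriving $r(y^\star)=0$ by contradiction, though both rest on the same weak-duality fact. You also write out the Lipschitz step, which the paper leaves to the remark in the preliminaries.
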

\begin{proof}
    By Proposition~\ref{prop:multi-ot-opt-dual-norm-bound}, there is an optimal dual solution $y^\star$ with $\|y^\star\|_\infty\le \|C\|_\infty$. Hence $\|y^\star\|_2\le \|C\|_\infty\sqrt M=R_C$, so $\mathcal{B}^M_2(R_C)$ contains an optimal dual solution. For any $y\in\R{M}$, let
    $$
        r(y)\teq\max_j(A^\top y-c)_j,
        \qquad
        \bar y\teq y-\frac{r(y)}{m}\mathbf{1}_M.
    $$
    Since $A^\top\mathbf{1}_M=m\mathbf{1}_N$, $\bar y$ is feasible for \eqref{e:MOT_dual_tensor}, and $b^\top\bar y=b^\top y-r(y)=-F(y)$. Weak duality therefore gives $F(y)\ge-\textup{\textsf{MOT}}(C,\musf)$ for every $y$. Conversely, take an optimal dual solution $y^\star\in\mathcal{B}^M_2(R_C)$. It is feasible, so $r(y^\star)\le0$. If $r(y^\star)<0$, then $y^\star-r(y^\star)\mathbf{1}_M/m$ would be dual feasible with objective $b^\top y^\star-r(y^\star)>\textup{\textsf{MOT}}(C,\musf)$, contradicting strong duality. Hence $r(y^\star)=0$ and $F(y^\star)=-b^\top y^\star=-\textup{\textsf{MOT}}(C,\musf)$, proving \eqref{e:dual_to_MOT}.

    If $\idx$ maximizes $\phi_{\idx}(y)$ at $y$, then for every $z$,
    \begin{align*}
        F(z)
        \ge \phi_{\idx}(z)-b^\top z 
        = \phi_{\idx}(y)-b^\top y+\langle a_{\idx}-b,z-y\rangle 
        = F(y)+\langle a_{\idx}-b,z-y\rangle.
    \end{align*}
    Hence $a_{\idx}-b\in\partial F(y)$. Each column of $A$ has $m$ ones, so $\|a_{\idx}\|_2=\sqrt m$. Also
    $$
        \|b\|_2^2=\sum_{k=1}^m\|\mu_k\|_2^2\le m,
    $$
    because the marginals are probability vectors. Hence $\|a_{\idx}-b\|_2\le2\sqrt m$ for every such subgradient.
\end{proof}

\subsection{Complexity}
\paragraph{Cost-entry model.}
The quantum upper bounds do not require a dense table storing $C$. We use the direct fixed-point value oracle
\begin{equation}\label{e:cost_entry_oracle}
    \mathcal{O}_C:\ket{\idx}\ket{w}\mapsto \ket{\idx}\ket{w\oplus C_{\idx}}
\end{equation}
as convenient shorthand for coherent access to one cost entry. In the structured MOT model induced by the original cost function, this oracle is implemented from coherent access to the marginal supports and a coherent evaluator for $\mathbf{c}$, for example
\[
    \mathcal O_z:\ket{k,i}\ket{0}\mapsto \ket{k,i}\ket{z_{k,i}},
    \qquad
    \mathcal O_{\mathbf c}:\ket{x_1,\ldots,x_m}\ket{w}
    \mapsto
    \ket{x_1,\ldots,x_m}\ket{w\oplus \mathbf c(x_1,\ldots,x_m)}.
\]
On input $\idx=(i_1,\ldots,i_m)$, one loads $z_{k,i_k}$ for $k=1,\ldots,m$, evaluates $\mathbf{c}(z_{1,i_1},\ldots,z_{m,i_m})$ to the required precision, and uncomputes the support registers. Thus the memory requirement is for the marginal supports and the working vectors, not for all $n^m$ entries of $C$. The quantum routines also coherently query the current classical iterate or scaling-potential arrays at the indexed coordinates; we assume these lookups have polylogarithmic overhead, and include them in the per-entry evaluation costs below. Equivalently, if these arrays are maintained classically, one may load the current length-$M=mn$ vector into a quantum-readable data structure at the beginning of each outer iteration or marginal-estimation call. This adds $O(M)$ classical work per such call, which is dominated by the displayed quantum subroutine costs for $m\ge2$ and $n\ge2$. The displayed runtimes count one call to \eqref{e:cost_entry_oracle} as unit work.

We work in the standard finite-precision arithmetic model: cost entries produced by \eqref{e:cost_entry_oracle}, the stored iterates, and the values compared by the quantum subroutines are represented using sufficiently many fixed-point bits. Comparisons of these represented numbers are exact, and the working precision is chosen so that accumulated arithmetic error is smaller than the target accuracy by a fixed constant factor. This fixed-point overhead is polylogarithmic in the problem parameters and is hidden in the $\widetilde O$ notation. In the theorem below, $C$ and $c$ denote these represented fixed-point data, so an exact certificate inequality such as $A^\top\widehat y\le c$ is exact for the represented LP. If the represented entries approximate an underlying real-valued cost tensor, choosing the entry precision smaller than a fixed constant multiple of $\varepsilon$ and adding the corresponding slack to the final dual shift gives the same additive value guarantee and a feasible certificate for the real-valued tensor, up to the stated $\varepsilon$ budget.

We apply Theorem~\ref{t:qpsm_prelim} to $F$ over the Euclidean ball $\mathcal{B}^M_2(R_C)$, where $R_C\teq \|C\|_\infty\sqrt M$. Starting from $y^{(0)}=0$, the iterates are
\begin{equation}\label{e:qpsm_iterates}
y^{(t+1)}=\Pi_{\mathcal{B}^M_2(R_C)}\left(y^{(t)}-\eta g^{(t)}\right),
    \qquad
    \bar y=\frac{1}{T}\sum_{t=1}^T y^{(t)}.
\end{equation}
For a tensor index $\idx$, let $a_{\idx}$ denote the corresponding column of $A$ and define
\[
    \phi_{\idx}(y)\teq \sum_{k=1}^m y_{k,i_k}-C_{\idx}.
\]
If $\idx_t$ maximizes $\phi_{\idx}(y^{(t)})$, then Lemma~\ref{lem:euclidean_value_geometry} gives $a_{\idx_t}-b\in\partial F(y^{(t)})$. Once the maximizing index is known, this subgradient is formed classically in $O(M)$ time by starting from $-b$ and adding one unit entry in each of the $m$ active marginal blocks.

\begin{lemma}\label{lem:quantum_max}
    Suppose cost entries are available to $B$ bits of fixed-point precision through the oracle \eqref{e:cost_entry_oracle}, either directly or through the structured implementation above. Let $T_\phi$ denote the time for one coherent evaluation and comparison of values $\phi_{\idx}(y)$. For the fixed-point representation used here, $T_\phi=O(m\,\poly(B,\log n,\log m,\log(1+\|C\|_\infty)))$. For any classical fixed-point vector $y\in\mathcal{B}_2^M(R_C)$ and failure probability $q\in(0,1)$, there is a quantum algorithm which outputs an index $\idx^\star$ satisfying
    \[
        \phi_{\idx^\star}(y)
        =
        \max_{\idx}\phi_{\idx}(y)
    \]
    for the represented fixed-point values
    with probability at least $1-q$, in time
    \[
        O\!\left(T_\phi\sqrt N\log(q^{-1})\right).
    \]
    Equivalently, counting one coherent evaluation and comparison of $\phi_{\idx}(y)$ as unit work, the query complexity is $O(\sqrt N\log(q^{-1}))$.
    The same procedure outputs the corresponding fixed-point maximum value within the same asymptotic time.
\end{lemma}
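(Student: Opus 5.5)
The plan is to reduce the statement to quantum maximum finding (Dürr--Høyer) over the index set $[n]^m$, which has $N = n^m$ elements. First we build a reversible circuit $U_\phi$ that computes
\[
\ket{\idx}\ket{0}\mapsto\ket{\idx}\ket{\phi_{\idx}(y)}
\]
in fixed point. On input $\idx=(i_1,\ldots,i_m)$, the circuit does the following:
\begin{enumerate}[label=(\roman*)]
\item it loads the $m$ coordinates $y_{k,i_k}$ via the assumed quantum-readable lookup of the classical iterate, at polylogarithmic cost each;
\item it sums them with $m-1$ fixed-point additions;
\item it queries $\mathcal{O}_C$ (or its structured implementation from $\mathcal O_z$ and $\mathcal O_{\mathbf c}$) once and subtracts $C_{\idx}$;
\item it uncomputes all scratch registers.
\end{enumerate}
Since $\|y\|_\infty\le R_C$ and $0\le C_{\idx}\le\|C\|_\infty$, every intermediate value has magnitude at most $mR_C+\|C\|_\infty$. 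The number of bits required is therefore $O(B+\log m+\log M+\log(1+\|C\|_\infty))$, and each addition and comparison costs polynomially in this. Together with the $m$ loads, this gives the claimed bound $T_\phi=O(m\,\poly(B,\log n,\log m,\log(1+\|C\|_\infty)))$. A comparator oracle $\ket{\idx}\ket{\idx'}\ket{w}\mapsto\ket{\idx}\ket{\idx'}\ket{w\oplus\mathbbm 1[\phi_{\idx}(y)>\phi_{\idx'}(y)]}$ is obtained by calling $U_\phi$ on both indices, comparing exactly, and uncomputing, so it costs $O(T_\phi)$.

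Next we invoke the Dürr--Høyer minimum/maximum-finding algorithm on this comparator, working over the uniform superposition on $[n]^m$. Preparing that superposition costs $O(m\log n)$, which is absorbed into $T_\phi$. Dürr--Høyer returns an index attaining the maximum of the represented values with probability at least $1/2$ using $O(\sqrt N)$ comparator calls. Ties cause no difficulty, because the requirement is only that some maximizer be returned, and comparisons of represented numbers are exact by the arithmetic model.

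To boost the success probability to $1-q$, we run the procedure $O(\log q^{-1})$ times independently. For each returned index we classically recompute $\phi_{\idx}(y)$ with one oracle call and $O(m)$ arithmetic. We output the candidate with the largest value; this succeeds unless every run fails, which happens with probability at most $q$. The recomputation also yields the fixed-point maximum value, which settles the final sentence of the statement. The total cost is $O(T_\phi\sqrt N\log q^{-1})$, and the query-complexity restatement follows by setting $T_\phi=1$.

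The main obstacle is the bookkeeping for exactness. The Dürr--Høyer guarantee assumes an exact total order on the values, so we must make sure the computed fixed-point values $\phi_{\idx}(y)$ are deterministic functions of $\idx$ and that garbage is uncomputed cleanly. Otherwise interference in amplitude amplification could break. We handle this by fixing one rounding rule for loading $C_{\idx}$ and the $y$-entries, and by defining $\phi$ as the value of this exact circuit; the lemma's claim is then about exactly these represented values.
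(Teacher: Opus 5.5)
Your proposal is correct and follows essentially the same route as the paper. Like the paper, you build a coherent evaluator for $\phi_{\idx}(y)$ from one cost-entry query plus $m$ lookups of $y$, run D\"urr--H{\o}yer maximum finding over the $N$ indices with exact fixed-point comparisons, boost to failure probability $q$ with $O(\log q^{-1})$ repetitions, and re-evaluate the returned index to get the maximum value; you just supply more detail on the bit-length bound, the comparator, and best-of-runs boosting than the paper's short proof (if you want full rigor on ties, break them lexicographically by index, since the original D\"urr--H{\o}yer analysis assumes distinct values).
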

\begin{proof}
    For a fixed tensor index $\idx=(i_1,\ldots,i_m)$, the represented value $\phi_{\idx}(y)$ is computed by one cost-entry evaluation, reading the $m$ entries $y_{k,i_k}$, and performing fixed-point arithmetic. Comparisons between two such represented values are exact bit comparisons and are included in $T_\phi$. Applying quantum maximum finding~\cite{durr1996quantum, van2018improvements} over the $N$ tensor indices, with standard boosting to failure probability $q$, uses $O(\sqrt N\log(q^{-1}))$ such evaluations and comparisons and returns an index whose represented value is maximal. The maximum value is obtained by evaluating the returned index once more.
\end{proof}

\begin{theorem}[Quantum subgradient method for MOT]\label{t:quantum_md}
    Suppose $n_1=\cdots=n_m=n$, the marginals have full support, and cost entries of the nonnegative tensor $C$ are available through the coherent fixed-point cost-entry model in Lemma~\ref{lem:quantum_max}. Let $N=n^m$ and $M=mn$. For every $\varepsilon\in(0,1)$, there is a quantum algorithm which, with probability at least $2/3$, outputs a scalar $\widehat V$. It can also output a vector $\widehat y\in\R{M}$ such that
    $$
        |\widehat V-\textup{\textsf{MOT}}(C,\musf)|\le\varepsilon,
        \qquad
        A^\top\widehat y\le c,
        \qquad
        b^\top\widehat y\ge \textup{\textsf{MOT}}(C,\musf)-\varepsilon.
    $$
    Define $\kappa_C\teq\max\{1,\|C\|_\infty/\varepsilon\}$.
    Then the running time is
    $$
      \widetilde O\!\left(m^3 n^{\frac{m}{2} +1} \; \kappa_C^2\right),
    $$ 
    where the $\widetilde O$ hides polylogarithmic factors in $m,n$ and $1+\kappa_C$.
\end{theorem}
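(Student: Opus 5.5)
We combine Theorem~\ref{t:qpsm_prelim} with the geometry of Lemma~\ref{lem:euclidean_value_geometry} and the oracle of Lemma~\ref{lem:quantum_max}. We run the iterates \eqref{e:qpsm_iterates} on $\Xcal=\mathcal{B}^M_2(R_C)$, with $R_C=\|C\|_\infty\sqrt M$, Lipschitz bound $G=2\sqrt m$, and target accuracy $\varepsilon/3$. This requires $T=O(G^2R_C^2/\varepsilon^2)=O(m\cdot mn\,\|C\|_\infty^2/\varepsilon^2)=O(m^2 n\kappa_C^2)$ iterations; the starting point $y^{(0)}=0$ satisfies $\|y^{(0)}-y^\star\|_2\le R_C$ by the lemma. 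In iteration $t$ we call Lemma~\ref{lem:quantum_max} at $y^{(t)}$ with failure probability $q=1/(10T)$ to obtain a maximizing index $\idx_t$. We then form $g^{(t)}=a_{\idx_t}-b$ classically in $O(M)$ time, take the gradient step, and project onto the ball by rescaling in $O(M)$ time. A union bound shows that with probability at least $9/10$ every call returns an exact maximizer of the represented values. On this event, every $g^{(t)}$ is a genuine subgradient with $\|g^{(t)}\|_2\le 2\sqrt m$. Theorem~\ref{t:qpsm_prelim} and \eqref{e:dual_to_MOT} then give $F(\bar y)\le -\textup{\textsf{MOT}}(C,\musf)+\varepsilon/3$.

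Next we extract the outputs. One more call to Lemma~\ref{lem:quantum_max} at $\bar y$ returns $r(\bar y)=\max_j(A^\top\bar y-c)_j$ exactly in the represented arithmetic, again with failure probability $1/(10T)$. We set $\widehat y=\bar y-r(\bar y)\mathbf{1}_M/m$. Since $A^\top\mathbf{1}_M=m\mathbf{1}_N$, we get $A^\top\widehat y\le c$ exactly. Moreover $b^\top\widehat y=-F(\bar y)\ge\textup{\textsf{MOT}}(C,\musf)-\varepsilon/3$. Put $\widehat V=b^\top\widehat y$. Weak duality gives $\widehat V\le\textup{\textsf{MOT}}(C,\musf)$, so $|\widehat V-\textup{\textsf{MOT}}(C,\musf)|\le\varepsilon/3$. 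The remaining slack in the $\varepsilon$ budget absorbs fixed-point rounding in the iterates and in $b^\top\widehat y$, as set out in the finite-precision convention above. The overall success probability is at least $2/3$.

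Finally, the runtime. Each iteration costs $O(T_\phi\sqrt N\log T)+O(M)$, and $T_\phi=\widetilde O(m)$. The total is therefore $\widetilde O\big(m^2n\kappa_C^2(m\,n^{m/2}+mn)\big)=\widetilde O(m^3n^{m/2+1}\kappa_C^2)$, because $mn\le m n^{m/2}$ for $m\ge2$. This is the claimed bound. The one delicate point is checking that the precision $B$ stays polylogarithmic. The iterates lie in a ball of radius $R_C$, the step size is $\eta=R_C/(G\sqrt T)$, and $\log T=O(\log(mn\kappa_C))$. So $O(\log(mn\kappa_C))$ extra bits per operation keep the accumulated error over $T$ steps below $\varepsilon/3$. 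Hence $T_\phi$ contributes only a factor of $m\,\polylog(m,n,\kappa_C)$.
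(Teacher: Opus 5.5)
Your proposal is correct and follows essentially the same route as the paper: projected subgradient descent on $\mathcal{B}^M_2(R_C)$ with $G=2\sqrt m$ and $T=O(m^2n\kappa_C^2)$, quantum maximum finding with a union bound to get exact subgradients, and a final uniform shift of $\bar y$ to produce a feasible dual certificate, giving the same $\widetilde O(m^3n^{m/2+1}\kappa_C^2)$ total. The differences are minor. You use the exact represented maximum from Lemma~\ref{lem:quantum_max} and set $\widehat V=b^\top\widehat y$, so weak duality supplies the upper side; the paper instead uses an $\varepsilon/8$-approximate $\tilde r$ and pads the shift by $\varepsilon/8$. That padding is also what guarantees exact feasibility under fixed-point rounding, so in your version the shift $r(\bar y)/m$ should be rounded upward.
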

\begin{proof}
    If $\|C\|_\infty\le\varepsilon$, output $\widehat V=0$ and $\widehat y=0$. Since $C\ge0$, this vector is dual feasible and $0\le\textup{\textsf{MOT}}(C,\musf)\le\|C\|_\infty\le\varepsilon$, so the claimed guarantees hold. Hence assume $\|C\|_\infty>\varepsilon$ for the rest of the proof, so $\kappa_C=\|C\|_\infty/\varepsilon$.

    Choose a fixed-point precision $B=\polylog(m,n,1+\kappa_C)$ large enough that the accumulated arithmetic error in the subgradient iterations, final value estimate, and dual shift is at most $\varepsilon/8$; this standard overhead is absorbed by the $\widetilde O$ notation. Apply the projected subgradient method from Theorem~\ref{t:qpsm_prelim} to minimize $F$ over $\mathcal{B}^M_2(R_C)$, where $R_C=\|C\|_\infty\sqrt M$. By Lemma~\ref{lem:euclidean_value_geometry}, this ball contains an optimum and $F$ is $G\teq2\sqrt m$-Lipschitz with respect to $\| \cdot \|_2$. Take
    $$
        T=O\!\left(mM\left(\frac{\|C\|_\infty}{\varepsilon}\right)^2\right)
    $$
    iterations. At iteration $t$, use Lemma~\ref{lem:quantum_max} with failure probability $q=1/(24T)$ to obtain an exact maximizing index $\idx_t$ for the represented values, and form the subgradient $g^{(t)}=a_{\idx_t}-b$ classically in $O(M)$ time. Conditioned on all maximum-finding calls succeeding, Lemma~\ref{lem:euclidean_value_geometry} and Theorem~\ref{t:qpsm_prelim} give, after accounting for the chosen fixed-point precision,
    $$
        F(\bar y)\le -\textup{\textsf{MOT}}(C,\musf)+\varepsilon/2.
    $$
    The union bound shows that this conditioning event fails with probability at most $1/24$.

    Each maximum-finding call costs $\widetilde O(m\sqrt N)$ time, and the projection/update costs $O(M)$. Thus the iterative part costs
    $$
        \widetilde O\!\left(
        T(m\sqrt N+M)
        \right).
    $$
    Substituting $M=mn$ and $N=n^m$, and using $m\ge2$, gives $\widetilde O(m^3n^{m/2+1}\kappa_C^2)$. The final value evaluation and slack computation use Lemma~\ref{lem:quantum_max} and are dominated by the same expression.

    Let $\bar y\in \mathcal{B}^M_2(R_C)$ be the averaged iterate, and suppose the event above holds. Then
    $$
        F(\bar y)\le -\textup{\textsf{MOT}}(C,\musf)+\varepsilon/2.
    $$
    Lemma~\ref{lem:euclidean_value_geometry} also gives the lower bound $F(\bar y)\ge-\textup{\textsf{MOT}}(C,\musf)$, so
    $$
        0\le F(\bar y)+\textup{\textsf{MOT}}(C,\musf)\le\varepsilon/2.
    $$
    Compute an estimate $\tilde r$ of $r(\bar y)=\max_j(A^\top\bar y-c)_j$ satisfying $|\tilde r-r(\bar y)|\le\varepsilon/8$, with failure probability at most $1/24$, and set $\widetilde F\teq \tilde r-b^\top\bar y$. Then $|\widetilde F-F(\bar y)|\le\varepsilon/8$, and the scalar $\widehat V\teq -\widetilde F$ satisfies $|\widehat V-\textup{\textsf{MOT}}(C,\musf)|\le\varepsilon$.

    To output a feasible dual certificate, define
    $$
        \widehat y\teq\bar y-\frac{\tilde r+\varepsilon/8}{m}\mathbf{1}_M.
    $$
    Since $\tilde r+\varepsilon/8\ge r(\bar y)$, $A^\top\widehat y\le c$. Moreover,
    $$
        b^\top\widehat y
        =b^\top\bar y-(\tilde r+\varepsilon/8)
        \ge b^\top\bar y-r(\bar y)-\varepsilon/4
        =-F(\bar y)-\varepsilon/4
        \ge \textup{\textsf{MOT}}(C,\musf)-\varepsilon.
    $$
    Weak duality gives the matching upper bound $b^\top\widehat y\le\textup{\textsf{MOT}}(C,\musf)$. The two failure events together have probability at most $1/12$; reducing constants in the boosting gives the stated success probability at least $2/3$.
\end{proof}
 
Specializing the QPSM bound to $m=2$ gives $\widetilde O(n^2(\|C\|_\infty/\varepsilon)^2)$, so for bimarginal OT this route only matches the best classical $n^2$ dependence and has worse scaling in $\|C\|_\infty/\varepsilon$. The next section gives a quantum analogue of multimarginal Sinkhorn~\cite[Algorithm 1]{lin2022complexity} for entropy-regularized MOT. It returns scaling potentials (as $mn \times 1$ vectors), and therefore an implicit Gibbs tensor, rather than an explicit coupling tensor. Through standard reductions, this enables one to estimate $\textsf{MOT}(C, \musf)$, and if desired, explicitly report the optimal coupling. This algorithm is a factor $\sqrt{n}$ faster than the QPSM (but has worse error scaling), so we postpone a discussion that compares our algorithms with other relevant quantum approaches until the end of the next section. 

\section{Quantum Sinkhorn for multimarginal OT}\label{s:quantum_tensor_sinkhorn}
The quantum projected subgradient method in Section~\ref{s:quantum} works directly with the unregularized MOT dual. In this section we develop a complementary quantum multimarginal Sinkhorn method for entropy-regularized MOT. Entropic regularization is useful here for two reasons. First, it converts the constrained primal problem into a smooth scaling problem with a compact description. Second, the output scaling potentials specify an implicit Gibbs tensor, so a quantum algorithm can avoid writing down all $N=n^m$ entries of the coupling.

\subsection{Entropic MOT and its associated dual}
Define the discrete Shannon entropy $H(X)\teq -\sum_{\idx}X_{\idx}\log X_{\idx}$, with the convention $0\log0=0$. This differs from the shifted entropy convention used in \cite{lin2022complexity} by an additive constant on probability tensors, and therefore leaves the scaling potentials and all marginal equations unchanged. Fix a regularization parameter $\eta>0$ and consider the \textit{entropic regularized} MOT problem:
\begin{equation}\label{e:entropic_mot_tensor}\tag{$\mathrm{MOT}_\eta$}
    \min_{X\in\Ucal(\musf)}
    \left\{\langle C,X\rangle-\eta H(X)\right\}.
\end{equation}
Let $\lambda=(\lambda_1,\ldots,\lambda_m)$ be the Lagrange multipliers for the marginal constraints. Keeping the redundant mass constraint $\|X\|_1=1$ while dualizing the marginal equations, set
\[
    \beta_k\teq \eta^{-1}\lambda_k,
    \qquad k\in[m].
\]
With this change of variables, and after changing sign and dividing the dual objective by $\eta$, the entropic dual is the unconstrained smooth convex problem
\begin{equation}\tag{$\mathrm{MOT}_\eta$-D}\label{e:entropic_mot_dual_scaling}
    \min_{\beta_1\in\mathbb{R}^{n_1},\ldots,\beta_m\in\mathbb{R}^{n_m}}
    \Phi_\eta(\beta)
    \teq
    \log\!\left(\sum_{\idx}
    \exp\!\left(\sum_{k=1}^m\beta_{k,i_k}-\frac{C_{\idx}}{\eta}\right)
    \right)
    -\sum_{k=1}^m\mu_k^\top\beta_k.
\end{equation}
Up to the irrelevant additive constant coming from the entropy convention, this is the dual formulation used in \cite[Section~2.2]{lin2022complexity}. 

Assuming the listed support points have positive marginal masses, the entropy term in \eqref{e:entropic_mot_tensor} makes the primal objective strictly convex and ensures that the unique optimizer $X^{\star}$ is strictly positive on the product support. The Karush-Kuhn-Tucker (KKT) optimality conditions imply that this solution admits the Gibbs representation
\begin{equation}\label{e:entropic_mot_scaling_form}
    X^{\star}_{\idx}
    =\widehat K(\beta)_{\idx}
    \teq
    \frac{K_{\idx}\exp\!\left(\sum_{k=1}^m \beta_{k,i_k}\right)}
    {\sum_{\idx'}K_{\idx'}\exp\!\left(\sum_{k=1}^m \beta_{k,i'_k}\right)},
    \qquad
    K_{\idx}\teq \exp(-C_{\idx}/\eta),
\end{equation}
for log-scaling potentials $\beta_k\in\mathbb{R}^{n_k}$ chosen so that $\proj_k(X)=\mu_k$ for all $k \in [m]$. Equivalently, if
\[
    K(\beta)_{\idx}\teq K_{\idx}\exp\!\left(\sum_{k=1}^m \beta_{k,i_k}\right),
\]
then solving \eqref{e:entropic_mot_tensor} is exactly the problem of scaling the positive Gibbs tensor $K$ so that $\widehat K(\beta)=K(\beta)/\|K(\beta)\|_1 \in \Ucal (\musf)$. After a harmless gauge shift of the potentials, one may also impose $\|K(\beta)\|_1=1$ and view the optimizer as the unnormalized scaled tensor $K(\beta)$. Note that the potentials $\lambda, \beta$ are unrelated to the LP dual variables $y$ from the preceding section. 

The dual perspective leads to the same observation. Since $\Phi_\eta(\beta)=\log\|K(\beta)\|_1-\sum_k\mu_k^\top\beta_k$, we have
\[
    \nabla_{\beta_k}\Phi_\eta(\beta)=\proj_k(\widehat K(\beta))-\mu_k,
    \qquad
    \widehat K(\beta)\teq \frac{K(\beta)}{\|K(\beta)\|_1}.
\]
Thus the marginal errors are exactly the block gradients of the entropic dual. The multimarginal Sinkhorn iterate of \cite[Algorithm 1]{lin2022complexity} is a block coordinate step on \eqref{e:entropic_mot_dual_scaling}: it chooses the next block by a greedy KL-residual rule and updates that block so that the corresponding marginal is corrected. Below we use the same one-block scaling update in normalized, gauge-equivalent notation, but choose the block randomly for the quantum implementation. The quantum task is therefore to implement these block updates without summing explicitly over $n^{m-1}$ entries in every tensor fiber.

\subsection{Quantum Multimarginal Sinkhorn}
We now spell out the randomized coordinate-descent iteration that we want to implement quantumly. For the complexity statements in this subsection, assume $n_1=\cdots=n_m=n$. Let $K\in\mathbb{R}_{>0}^{[n]^m}$ be a positive tensor and let $\mu_k\in\Delta^n_{>0}$ be the target marginals. In this subsection we write
\[
    Z(\beta)
    \teq
    \sum_{\idx}K_{\idx}\exp\!\left(\sum_{\ell=1}^m\beta_{\ell,i_\ell}\right),
    \qquad
    \Phi_K(\beta)
    \teq
    \log Z(\beta)-\sum_{\ell=1}^m\mu_\ell^\top\beta_\ell.
\]
For the Gibbs tensor $K_{\idx}=\exp(-C_{\idx}/\eta)$, this is exactly the dual objective $\Phi_\eta$ in \eqref{e:entropic_mot_dual_scaling}. The normalized scaled tensor is
\[
    \widehat K(\beta)_{\idx}
    \teq
    \frac{K_{\idx}\exp(\sum_\ell\beta_{\ell,i_\ell})}{Z(\beta)}.
\]
The current $k$-th marginal and total marginal error are
\[
    q_k(\beta)\teq\proj_k(\widehat K(\beta)),
    \qquad
    E(\beta)\teq\sum_{k=1}^m\|q_k(\beta)-\mu_k\|_1.
\]

The randomized Sinkhorn iterate is simple. Starting from $\beta^{(0)} = 0$, at iteration $t$ we sample a mode
\[
    k_t\sim\mathrm{Unif}([m]).
\]
For a selected mode $k$, the exact one-block update is
\begin{equation}\label{e:exact_random_sinkhorn_update}
    \beta_{k,a}^{+}
    =\beta_{k,a}+\log\mu_{k,a}-\log q_{k,a}(\beta),
    \qquad
    \beta_{\ell}^{+}=\beta_\ell\quad(\ell\neq k).
\end{equation}
At iteration $t$, we apply \eqref{e:exact_random_sinkhorn_update} with $k=k_t$ and $\beta=\beta^{(t)}$, followed by the usual harmless recentering of the potentials. This update is the same one-block scaling step used in multimarginal Sinkhorn, written for the normalized tensor $\widehat K(\beta)$; it corrects the selected marginal to $\mu_{k_t}$, while the other marginals are left to future random updates. The difference from the greedy implementation of \cite[Algorithm 1]{lin2022complexity} is the randomized choice of the selected mode.

Thus a quantum implementation has one main computational task per iteration: estimate the vector $\log q_{k_t}(\beta^{(t)})$ without explicitly summing over every fiber entry. For a fixed mode $k$ and coordinate $a$, define the unnormalized fiber mass
\[
    S_{k,a}(\beta)
    \teq
    \sum_{\idx:i_k=a}K_{\idx}\exp\!\left(\sum_{\ell=1}^m\beta_{\ell,i_\ell}\right),
    \qquad
    q_{k,a}(\beta)=\frac{S_{k,a}(\beta)}{Z(\beta)}.
\]
Therefore
\[
    \log q_{k,a}(\beta)=\log S_{k,a}(\beta)-\log Z(\beta).
\]
The first term is a log-sum-exp over a fiber of size $n^{m-1}$, and the second is a log-sum-exp over the whole tensor. The algorithmic plan is consequently:
\[
\begin{array}{ll}
\textup{(i)} & \textup{use a quantum subroutine to estimate }\log S_{k_t,a}(\beta^{(t)})\textup{ for every }a,\textup{ and }\log Z(\beta^{(t)}), \\
\textup{(ii)} & \textup{perform the Sinkhorn update with these noisy log-marginal estimates,} \\
\textup{(iii)} & \textup{prove that the resulting noisy randomized coordinate descent reaches }E(\beta)\le\delta.
\end{array}
\]
The next lemmas carry out these three steps. Lemma~\ref{lem:logsumexp} states the quantum \textsc{LogSumExp} primitive for estimating quantities of the form 
$$\log\left(\frac{1}{r}\sum_{j=1}^d a_j\exp(w_j)\right).$$
Lemma~\ref{lem:quantum_tensor_marginals} converts this subroutine into a one-mode marginal oracle, and Lemma~\ref{lem:noisy_sinkhorn_descent} proves correctness and robustness of one update. Theorem~\ref{t:quantum_tensor_sinkhorn} combines these ingredients into the final complexity bound.

\begin{lemma}[Approximate \textsc{LogSumExp}, specialized from Theorem 13.3.2 in \cite{nieuwboer2024classical}]\label{lem:logsumexp}
    Let $d\ge1$. Assume coherent query access to numbers $a_j\in[\gamma,1]$ and $w_j\in[-B,B]$, for $j\in[d]$, and let $r\in(0,1]$ be known. For every additive accuracy $\delta\in(0,1)$ and failure probability $\zeta\in(0,1)$, there is a quantum algorithm \textsc{LogSumExp} which outputs $\Lambda$ satisfying
    \[
        \left|
        \Lambda-
        \log\left(\frac{1}{r}\sum_{j=1}^d a_j\exp(w_j)\right)
        \right|\le\delta
    \]
    with probability at least $1-\zeta$, in time
    \[
        O\!\left(
            \frac{\sqrt d}{\delta}
            \log^2\!\left(\frac{e}{\gamma}\right)
            \polylog\!\left(d,\delta^{-1},\zeta^{-1},1+B,1+\log\frac{e}{\gamma},r^{-1}\right)
        \right).
    \] 
\end{lemma}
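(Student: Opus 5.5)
The plan is to obtain this statement as a specialization of the quantum log-sum-exp / partition-function estimator of \cite{nieuwboer2024classical}, and not to build a new primitive. The approach is multiplicative estimation of the weighted sum $S\teq\sum_{j} a_j e^{w_j}$ to relative accuracy $\delta$, followed by taking logarithms.

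\textbf{Step 1: shift into a bounded-amplitude form.} Using quantum maximum finding (as in Lemma~\ref{lem:quantum_max}), compute $w_{\max}=\max_j w_j$ with $O(\sqrt d\log\zeta^{-1})$ queries. Then write
\[
S=e^{w_{\max}}\sum_j a_j e^{w_j-w_{\max}}.
\]
Each summand $p_j\teq a_j e^{w_j-w_{\max}}$ lies in $(0,1]$. Since the maximizing index $j^\star$ has $a_{j^\star}\ge\gamma$, the normalized sum $s\teq\sum_j p_j$ satisfies $s\ge \gamma$. The logarithm of the target is then
\[
w_{\max}+\log s-\log r.
\]
It therefore suffices to estimate $\log s$ to additive accuracy $\delta$, which amounts to estimating $s$ to relative error $O(\delta)$.

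\textbf{Step 2: amplitude estimation.} Prepare the uniform superposition over $j\in[d]$. Coherently compute $p_j$ to $\polylog(1+B,\delta^{-1})$ bits, and rotate an ancilla so that the amplitude of $\ket{1}$ is $\sqrt{p_j}$. The success probability is then $s/d\ge\gamma/d$.

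Plain amplitude estimation to relative error $\delta$ costs $O(\sqrt{d/s}\,\delta^{-1})$, which could be as large as $\sqrt{d/\gamma}/\delta$. This is the main obstacle, because the claimed bound depends only on $\log^2(e/\gamma)$.

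To handle it, I would follow the Nieuwboer construction and split the range $[\gamma,1]$ into $O(\log(e/\gamma))$ dyadic buckets by the value of $p_j$. In each bucket the values are within a factor of $2$ of each other. Counting the bucket sizes with quantum counting costs $O(\sqrt d/\delta)$ per bucket, with accuracy split across the buckets. Summing over buckets yields relative error $\delta$ at total cost $O(\sqrt d\,\delta^{-1}\log^2(e/\gamma))$. One logarithm comes from the number of buckets and one from the error allocation and median boosting across them. Entries with $p_j<\gamma$ can arise only through the exponential factor. I expect to handle them either by a truncation argument, since their contribution is at most $d\cdot$ small, or by extending the buckets down to a floor of $\delta\gamma/d$. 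The latter only adds $\log d$ factors, which fall into the polylog.

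\textbf{Step 3: boosting and error accounting.} Use the median of $O(\log\zeta^{-1})$ repetitions to bring the failure probability down to $\zeta$. Propagate the fixed-point errors in $w_j$ and $p_j$ and the additive $\log r$ term, all of which are polylogarithmic in $1+B$ and $r^{-1}$. Multiplying the costs gives the stated time bound. Since the statement is labeled as specialized from \cite[Theorem 13.3.2]{nieuwboer2024classical}, the proof can ultimately be written as a verification that our parameters ($a_j\in[\gamma,1]$, $|w_j|\le B$, known $r$) satisfy that theorem's hypotheses, with Steps 1--2 serving as the justification.
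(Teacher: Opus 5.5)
Your top-level route, specializing \cite[Theorem~13.3.2]{nieuwboer2024classical}, is exactly what the paper does. The paper gives no argument beyond that citation. However, the self-contained justification in your Steps 1--2 does not work as written.

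\textbf{Step 2 only gives a constant-factor estimate.} Counting how many $p_j$ fall in each dyadic bucket $(2^{-b-1},2^{-b}]$ determines $s=\sum_j p_j$ only to within a factor of $2$. This holds no matter how accurately the counts are estimated, so it cannot give relative error $\delta$. To repair it you would have to amplitude-estimate the \emph{sum} of the $p_j$ inside each bucket. A relative error $\delta$ per bucket then already gives relative error $\delta$ for the total, so no accuracy needs to be split among buckets. The alternative is buckets of ratio $1+\delta$, which costs an extra factor $\delta^{-1}$. Either way, your explanation of the $\log^2(e/\gamma)$ factor as ``one log for the number of buckets and one for error allocation'' does not hold up.

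\textbf{The bucketing is forced by your normalization in Step 1.} You have coherent access to both $a_j$ and $w_j$, so run maximum finding on $v_j=w_j+\log a_j$ rather than on $w_j$. Then set $p_j=\exp(v_j-\max_{j'}v_{j'})\in(0,1]$, with the usual $O(1)$ slack if the $v_j$ are computed approximately.
\begin{itemize}
\item The largest $p_j$ is now (essentially) $1$, so $s\ge 1/2$ and the marked-state probability is $s/d=\Omega(1/d)$.
\item Plain amplitude estimation with $O(\sqrt d/\delta)$ iterations, median-boosted $O(\log\zeta^{-1})$ times, gives $s$ to relative error $\delta$.
\item Maximum finding costs only an additional $O(\sqrt d\log\zeta^{-1})$ queries.
\item Adding back $\max_{j'}v_{j'}-\log r$ gives the log-sum-exp to additive error $O(\delta)$.
\end{itemize}

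In this version $\gamma$ does not enter the query count at all. It only affects the bit length of the fixed-point arithmetic, i.e., representing $a_j$ or $\log a_j$ to the needed precision. That is a plausible source of a $\log^2(e/\gamma)$ overhead in a \emph{time} bound, so this route is consistent with, and in query count stronger than, the stated bound.

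So either cite the theorem directly, as the paper does, or replace Steps 1--2 by this normalization. With it, the truncation at $\delta\gamma/d$ and the bucket analysis become unnecessary.
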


\begin{lemma}[Quantum marginal estimation]\label{lem:quantum_tensor_marginals}
    Suppose $n_1=\cdots=n_m=n$, there are known bounds $0<K_{\min}\le K_{\idx}\le K_{\max}$ for every $\idx\in[n]^m$, and there is coherent entry access to $K$. Assume also that the current classical potentials $\beta$ can be queried coherently with polylogarithmic overhead. Let $\|\beta\|_\infty\le B$ and fix $k\in[m]$. For every $\alpha,p\in(0,1)$, there is a quantum algorithm which outputs numbers $\widetilde\ell_{k,a}$ satisfying
    \[
        \left|\widetilde\ell_{k,a}-\log q_{k,a}(\beta)\right|\le\alpha,
        \qquad a\in[n],
    \]
    with probability at least $1-p$, in time
    \[
        O\!\left(
            m n^{\frac{m+1}{2}}\alpha^{-1}
            \log^2\!\left(e\frac{K_{\max}}{K_{\min}}\right)
            \polylog\!\left(n,m,\alpha^{-1},p^{-1},1+B,1+\log\frac{K_{\max}}{K_{\min}}\right)
        \right).
    \]
\end{lemma}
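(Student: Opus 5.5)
The plan is to write each quantity $\log q_{k,a}(\beta)$ as a difference $\log S_{k,a}(\beta)-\log Z(\beta)$. Each term is then estimated to accuracy $\alpha/2$ by calling Lemma~\ref{lem:logsumexp}, for a total of $n+1$ calls. Since $\log Z$ is shared by all $a$, it needs to be estimated only once.

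\textbf{Putting the sums into LogSumExp form.} First rescale to meet the hypothesis $a_j\in[\gamma,1]$. For the fiber sum $S_{k,a}(\beta)$, index the $d=n^{m-1}$ terms by $\idx$ with $i_k=a$, and set
\[
a_{\idx}\teq K_{\idx}/K_{\max}\in[K_{\min}/K_{\max},1],
\qquad
w_{\idx}\teq\sum_{\ell}\beta_{\ell,i_\ell}\in[-mB,mB].
\]
Then
\[
\log S_{k,a}=\log K_{\max}+\log\Big(\sum_{\idx:i_k=a}a_{\idx}e^{w_{\idx}}\Big),
\]
and we take $r=1$ with $\gamma=K_{\min}/K_{\max}$. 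The same rescaling applies to $Z(\beta)$ with $d=n^m$. The additive constant $\log K_{\max}$ cancels in the difference, so it never has to be computed. Coherent access to $a_{\idx}$ and $w_{\idx}$ costs one $K$-entry query plus $m$ coherent lookups of $\beta$ and $O(m)$ fixed-point additions. This is where the factor $m$ in the bound comes from, together with polylogarithmic overhead.

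\textbf{Error and probability accounting.} Run each call with accuracy $\alpha/2$ and failure probability $\zeta=p/(n+1)$. On the event that all calls succeed, the outputs $\widetilde\ell_{k,a}\teq\Lambda_{k,a}-\Lambda_Z$ satisfy $|\widetilde\ell_{k,a}-\log q_{k,a}|\le\alpha$ by the triangle inequality. A union bound shows this event has probability at least $1-p$.

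\textbf{Runtime.} The $n$ fiber calls each cost
\[
O\big(\sqrt{n^{m-1}}\,\alpha^{-1}\log^2(eK_{\max}/K_{\min})\,\polylog\big),
\]
which sums to $n\cdot n^{(m-1)/2}=n^{(m+1)/2}$. The single $Z$ call costs $n^{m/2}\le n^{(m+1)/2}$. Multiplying by the $O(m)$ per-query evaluation cost gives the stated bound. The polylogarithmic arguments are $d\le n^m$ (so $\log d=m\log n$), $\zeta^{-1}=(n+1)/p$, $1+mB$ and $1+\log(K_{\max}/K_{\min})$, all of which are absorbed into the displayed $\polylog(n,m,\alpha^{-1},p^{-1},1+B,1+\log(K_{\max}/K_{\min}))$.

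The main obstacle is modest: checking that the rescaling satisfies the hypotheses of Lemma~\ref{lem:logsumexp} exactly, with $a_j\le1$, $\gamma=K_{\min}/K_{\max}$ and bounded $w_j$. A second point is that the estimates for different $a$ are sequential. This costs a factor $n$ and is why the exponent is $(m+1)/2$ rather than $m/2$.
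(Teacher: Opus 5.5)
Your proposal is correct and follows the paper's proof essentially step for step. You rescale $K$ by $K_{\max}$ so that $\gamma=K_{\min}/K_{\max}$ with $r=1$ and exponent bound $mB$, make $n$ fiber calls plus one full-tensor call to Lemma~\ref{lem:logsumexp}, subtract, and union-bound; the only differences are cosmetic constants (accuracy $\alpha/2$ and failure probability $p/(n+1)$ per call, versus the paper's $\alpha/4$ with $p/(2n)$ and $p/2$).
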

\begin{proof}
    Rescale $K$ by the known upper bound $K_{\max}$. This adds the same constant $-\log K_{\max}$ to all log-sums and does not change any normalized marginal $q_k(\beta)$. Thus the entries used by Lemma~\ref{lem:logsumexp} lie in $[\gamma,1]$ with $\gamma=K_{\min}/K_{\max}$, and the common shift cancels when a fiber log-sum is subtracted from the full tensor log-sum. We invoke Lemma~\ref{lem:logsumexp} with $r=1$ and exponent bound $mB$, since $\left|\sum_\ell\beta_{\ell,i_\ell}\right|\le mB$.

    For each $a\in[n]$, the fiber defining $S_{k,a}(\beta)$ has $n^{m-1}$ terms. Applying Lemma~\ref{lem:logsumexp} with accuracy $\alpha/4$ and failure probability $p/(2n)$ gives an $\alpha/4$-accurate estimate of $\log S_{k,a}(\beta)$ in time
    \[
        O\!\left(
            m n^{\frac{m-1}{2}}\alpha^{-1}
            \log^2\!\left(e\frac{K_{\max}}{K_{\min}}\right)
            \polylog\!\left(n,m,\alpha^{-1},p^{-1},1+B,1+\log\frac{K_{\max}}{K_{\min}}\right)
        \right),
    \]
    where the factor $m$ accounts for evaluating the exponent $\sum_\ell\beta_{\ell,i_\ell}$ of a tensor entry. Applying the same primitive once to the full tensor with failure probability $p/2$ gives an $\alpha/4$-accurate estimate of $\log Z(\beta)$; this costs only $O(m n^{m/2}\alpha^{-1})$ up to the same logarithmic factors. Subtracting the estimate of $\log Z(\beta)$ from the $n$ fiber estimates gives every $\log q_{k,a}(\beta)$ to additive error at most $\alpha/2$, and hence at most $\alpha$. A union bound over the $n$ fiber estimates and the full-tensor estimate gives success probability at least $1-p$.
\end{proof}

For the descent analysis of \eqref{e:exact_random_sinkhorn_update}, set
\[
    \rho_k(\beta)
    \teq
    \textsf{KL}\!\left(\mu_k\middle\|q_k(\beta)\right)
    =\sum_a\mu_{k,a}\log\frac{\mu_{k,a}}{q_{k,a}(\beta)}.
\]

\begin{lemma}[Exact and noisy one-mode Sinkhorn descent]\label{lem:noisy_sinkhorn_descent}
    The exact update \eqref{e:exact_random_sinkhorn_update} makes the selected marginal equal to the corresponding target marginal and satisfies
    \begin{equation}\label{e:exact_coordinate_decrease}
        \Phi_K(\beta)-\Phi_K(\beta^+)=\rho_k(\beta).
    \end{equation}
    Now suppose instead that the update uses estimates
    \[
        \widetilde\ell_a=\log q_{k,a}(\beta)+e_a,
        \qquad |e_a|\le\alpha,
    \]
    and sets $\widetilde\beta_{k,a}^{+}=\beta_{k,a}+\log\mu_{k,a}-\widetilde\ell_a$, leaving the other blocks fixed. Then
    \[
        \Phi_K(\beta)-\Phi_K(\widetilde\beta^+)
        \ge
        \rho_k(\beta)-\frac{\alpha^2}{2}.
    \]
\end{lemma}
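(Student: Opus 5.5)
The plan is to compute $\Phi_K(\beta^+)$ in closed form for an arbitrary perturbation of block $k$, and then specialize to the exact and the noisy updates. Write $\beta^+_{k,a}=\beta_{k,a}+\delta_a$, with all other blocks unchanged. Grouping the sum defining $Z$ by the value of the $k$-th index gives
\[
Z(\beta^+)=\sum_a e^{\delta_a}S_{k,a}(\beta)=Z(\beta)\sum_a q_{k,a}(\beta)e^{\delta_a}.
\]
The linear term changes by $-\mu_k^\top\delta$. Therefore
\[
\Phi_K(\beta)-\Phi_K(\beta^+)=\mu_k^\top\delta-\log\Big(\sum_a q_{k,a}(\beta)e^{\delta_a}\Big).
\]
This identity is the basic tool. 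Recentering (a gauge shift that adds a constant to one block and subtracts it from another) leaves $\Phi_K$ unchanged, by the same computation as in Proposition~\ref{prop:multi-ot-dual-invariant} together with $\mu_\ell^\top\mathbf{1}=1$. So recentering can be ignored.

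\textbf{Exact update.} Take $\delta_a=\log\mu_{k,a}-\log q_{k,a}(\beta)$. Then $\sum_a q_{k,a}e^{\delta_a}=\sum_a\mu_{k,a}=1$, so the log term vanishes, and $\mu_k^\top\delta=\rho_k(\beta)$. This gives \eqref{e:exact_coordinate_decrease}. The new $k$-th marginal is $q_{k,a}e^{\delta_a}/\sum_{a'}q_{k,a'}e^{\delta_{a'}}=\mu_{k,a}$, so the selected marginal is corrected exactly.

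\textbf{Noisy update.} Now $\delta_a=\log\mu_{k,a}-\log q_{k,a}-e_a$. The identity becomes
\[
\Phi_K(\beta)-\Phi_K(\widetilde\beta^+)=\rho_k(\beta)-\mu_k^\top e-\log\Big(\sum_a\mu_{k,a}e^{-e_a}\Big).
\]
It therefore suffices to show that $g(e)\teq\mu_k^\top e+\log\mathbb{E}_{\mu_k}[e^{-e}]\le\alpha^2/2$ whenever $|e_a|\le\alpha$. This is exactly Hoeffding's lemma: the random variable $-e_a$ under $a\sim\mu_k$ takes values in an interval of length $2\alpha$. Hence its centered log-moment-generating function at parameter $1$ is at most $(2\alpha)^2/8=\alpha^2/2$. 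The bound holds for any real-valued bounded noise, so the sign convention of $e_a$ does not matter.

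\textbf{Main obstacle.} The only nontrivial point is getting the constant $\alpha^2/2$ rather than the crude bound $2\alpha$. A direct estimate of the log term alone gives only $\alpha$ per term. Hoeffding's lemma is needed to exploit the cancellation of the first-order term $\mu_k^\top e$, and the rest of the argument is bookkeeping.
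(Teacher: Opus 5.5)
Your proof is correct and takes essentially the same route as the paper. The paper also computes the new fiber masses, obtaining $Z(\widetilde\beta^+)=Z(\beta)\sum_a\mu_{k,a}e^{-e_a}$, substitutes into $\Phi_K$ to get $\rho_k(\beta)-\mu_k^\top e-\log\sum_a\mu_{k,a}e^{-e_a}$, and finishes with Hoeffding's lemma for $-e_a$ under $a\sim\mu_k$ with range $2\alpha$; your general one-block perturbation identity is a tidier packaging of that same computation. One minor aside: recentering plays no role in this lemma, and $\Phi_K$ is in fact invariant under adding an arbitrary constant to a single block (both $\log Z$ and $\mu_\ell^\top\beta_\ell$ shift by that constant), so your zero-sum restriction is unnecessary, and the paper's midpoint recentering is not zero-sum in general anyway.
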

\begin{proof}
    Let $Z=Z(\beta)$ and $F_a=S_{k,a}(\beta)=Zq_{k,a}(\beta)$. Under the exact update, the new unnormalized fiber mass is
    \[
        F_a^+=F_a\frac{\mu_{k,a}}{q_{k,a}(\beta)}=Z\mu_{k,a}.
    \]
    Therefore $Z(\beta^+)=\sum_aF_a^+=Z$ and the normalized $k$-th marginal is exactly $\mu_k$. Since only block $k$ changes,
    \[
        \Phi_K(\beta)-\Phi_K(\beta^+)
        =\sum_a\mu_{k,a}(\beta_{k,a}^+-\beta_{k,a})
        =\sum_a\mu_{k,a}\log\frac{\mu_{k,a}}{q_{k,a}(\beta)},
    \]
    proving \eqref{e:exact_coordinate_decrease}.

    For the noisy update, the new fiber mass is $\widetilde F_a^+=Z\mu_{k,a}e^{-e_a}$ and hence
    \[
        Z(\widetilde\beta^+)=Z\sum_a\mu_{k,a}e^{-e_a}.
    \]
    A direct substitution in $\Phi_K$ gives
    \[
        \Phi_K(\beta)-\Phi_K(\widetilde\beta^+)
        =
        \rho_k(\beta)-\sum_a\mu_{k,a}e_a
        -\log\!\left(\sum_a\mu_{k,a}e^{-e_a}\right).
    \]
    Hoeffding's lemma applied to the random variable $-e_a$, with $a\sim\mu_k$ and range length at most $2\alpha$, gives
    \[
        \log\!\left(\sum_a\mu_{k,a}e^{-e_a}\right)
        \le
        -\sum_a\mu_{k,a}e_a+\frac{\alpha^2}{2}.
    \]
    The claimed noisy descent bound follows.
\end{proof}

Define
\[
    K_{\min}\teq\min_{\idx}K_{\idx},
    \qquad
    K_{\max}\teq\max_{\idx}K_{\idx},
    \qquad
    \mu_{\min}\teq\min_{k,a}\mu_{k,a},
\]
and
\[
    R_K\teq \log\frac{K_{\max}}{K_{\min}}+\log\frac{1}{\mu_{\min}}.
\]
The next estimates adapt the range, objective-gap, and progress arguments behind multimarginal Sinkhorn to noisy log-marginal estimates and uniform random block selection. For the exact update, averaging \eqref{e:exact_coordinate_decrease} over a uniform random mode and applying Pinsker's inequality gives
\begin{equation}\label{e:random_coordinate_decrease_exact}
    \mathbb{E}_{k\sim\mathrm{Unif}([m])}
    \left[\Phi_K(\beta)-\Phi_K(\beta^+)\right]
    =\frac1m\sum_{k=1}^m\rho_k(\beta)
    \ge \frac{E(\beta)^2}{2m^2}.
\end{equation}

The quantum implementation uses noisy log-marginals, so we also need a bound on the size of the potentials seen by the log-sum-exp primitive. The following elementary range bound is the noisy analogue of the range induction in \cite{lin2022complexity}. It is invariant under adding constants to individual blocks, and after each update we choose the centered gauge in which every block has midpoint zero.

\begin{lemma}[Noisy Sinkhorn potential range]\label{lem:noisy_sinkhorn_range}
    Suppose that an update in mode $k$ is performed with errors $|e_a|\le\alpha$ as in Lemma~\ref{lem:noisy_sinkhorn_descent}. After recentering the updated block by an additive constant, the updated potentials satisfy
    \[
        \max_a\beta_{\ell,a}-\min_a\beta_{\ell,a}\le R_K+2\alpha,
        \qquad \ell\in[m],
    \]
    provided the same bound held before the update.
\end{lemma}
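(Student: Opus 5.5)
Only block $k$ changes, so for $\ell\neq k$ the bound holds by hypothesis and the whole task is to bound the range of the new block $\widetilde\beta^{+}_k$. The plan is to write its entries explicitly, bound the spread of each piece separately, and then note that recentering by a constant leaves the spread unchanged.

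By the noisy update of Lemma~\ref{lem:noisy_sinkhorn_descent},
\[
    \widetilde\beta^{+}_{k,a}=\beta_{k,a}+\log\mu_{k,a}-\log q_{k,a}(\beta)-e_a .
\]
Write $q_{k,a}(\beta)=S_{k,a}(\beta)/Z(\beta)$. The normalizer $\log Z(\beta)$ is common to all $a$, so it drops out of the range. Hence, up to an $a$-independent constant,
\[
    \widetilde\beta^{+}_{k,a}=\log\mu_{k,a}-\log\Big(\sum_{\idx:i_k=a}K_{\idx}\exp\big(\textstyle\sum_{\ell\neq k}\beta_{\ell,i_\ell}\big)\Big)-e_a .
\]
The term $\beta_{k,a}$ has cancelled: it factors out of $S_{k,a}(\beta)$ as $e^{\beta_{k,a}}$. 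This cancellation is the key observation, and it is why the bound does not grow from step to step.

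Next, define $G_a\teq\sum_{\idx:i_k=a}K_{\idx}\exp(\sum_{\ell\neq k}\beta_{\ell,i_\ell})$. Every fiber $\{\idx:i_k=a\}$ ranges over the same index set of the other coordinates $(i_\ell)_{\ell\neq k}$. The weights $W\teq\exp(\sum_{\ell\neq k}\beta_{\ell,i_\ell})$ do not depend on $a$, and only $K$ varies with $a$. Therefore
\[
    K_{\min}\sum W\le G_a\le K_{\max}\sum W ,
\]
where both sums run over the shared index set. This gives $\max_a\log G_a-\min_a\log G_a\le\log(K_{\max}/K_{\min})$.

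The remaining terms are bounded directly. Since $\mu_{k,a}\in[\mu_{\min},1]$, the term $\log\mu_{k,a}$ has range at most $\log(1/\mu_{\min})$. The error term $e_a$ has range at most $2\alpha$. Adding the three spreads bounds the range of $\widetilde\beta^{+}_k$ by $R_K+2\alpha$. Recentering by an additive constant preserves this range, and the other blocks are untouched.

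The bound therefore holds without using the inductive hypothesis on block $k$ itself; the hypothesis is needed only for the unchanged blocks $\ell\neq k$. The one delicate point is keeping the $a$-independent constants, $\log Z$ and the gauge shift, separate from the $a$-dependent terms, since only the latter affect the range.
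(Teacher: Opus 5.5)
Your proposal is correct and matches the paper's proof step for step. Both arguments factor $e^{\beta_{k,a}}$ out of the fiber mass (your $G_a$ is the paper's $T_a$), which removes the dependence on the old block $k$. Both then bound the ratio of fiber sums over the shared index set by $K_{\max}/K_{\min}$, add the $\log(1/\mu_{\min})$ and $2\alpha$ spreads, and note that recentering and the untouched blocks $\ell\neq k$ do not affect the bound.
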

\begin{proof}
    Adding constants to the blocks changes $K(\beta)$ by a global multiplicative factor and hence leaves $\widehat K(\beta)$, all marginals, and $\Phi_K(\beta)$ unchanged. It therefore suffices to control block ranges.

    Only block $k$ changes. Write
    \[
        S_{k,a}(\beta)=e^{\beta_{k,a}}T_a,
        \qquad
        T_a\teq \sum_{\idx:i_k=a}K_{\idx}\exp\!\left(\sum_{\ell\ne k}\beta_{\ell,i_\ell}\right).
    \]
    The noisy update gives
    \[
        \widetilde\beta^+_{k,a}
        =\log\mu_{k,a}-\log T_a+\log Z(\beta)-e_a.
    \]
    For any $a,a'$, the two sums $T_a$ and $T_{a'}$ have the same index set in the remaining $m-1$ modes and termwise weights that differ only through $K$. Hence
    \[
        \frac{K_{\min}}{K_{\max}}\le \frac{T_a}{T_{a'}}\le \frac{K_{\max}}{K_{\min}}.
    \]
    Also $\mu_{k,a}/\mu_{k,a'}\le1/\mu_{\min}$. Therefore
    \[
        |\widetilde\beta^+_{k,a}-\widetilde\beta^+_{k,a'}|
        \le \log\frac{K_{\max}}{K_{\min}}+\log\frac1{\mu_{\min}}+2\alpha
        =R_K+2\alpha.
    \]
    The other blocks are unchanged except for their harmless recentering constants, so the claimed range bound is preserved.
\end{proof}

\begin{lemma}[Optimal entropic potential range]\label{lem:optimal_sinkhorn_range}
    There is an optimal solution $\beta^\star$ of $\Phi_K$ whose blocks satisfy
    \[
        \max_a\beta^\star_{k,a}-\min_a\beta^\star_{k,a}\le R_K,
        \qquad k\in[m].
    \]
\end{lemma}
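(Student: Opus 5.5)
We first show that an optimum of $\Phi_K$ exists, then read off the range bound from the first-order conditions, in the same way as in the proof of Lemma~\ref{lem:noisy_sinkhorn_range}. The main point is existence. $\Phi_K$ is invariant under the gauge shifts $\beta_k\mapsto\beta_k+c_k\mathbf{1}_n$ for arbitrary constants $c_1,\ldots,c_m$: $\log Z$ shifts by $\sum_k c_k$, and so does $\sum_k\mu_k^\top\beta_k$, since each $\mu_k$ is a probability vector. We may therefore restrict to the subspace where every block has mean zero.

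On that subspace we argue coercivity. Because $K_{\idx}\ge K_{\min}>0$, every term of $Z(\beta)$ is at least $K_{\min}\exp(\sum_\ell\beta_{\ell,i_\ell})$. Taking the multi-index $\idx$ that maximizes each block separately gives
\[
\Phi_K(\beta)\ge \log K_{\min}+\sum_{k}\Bigl(\max_a\beta_{k,a}-\mu_k^\top\beta_k\Bigr).
\]
Each summand is nonnegative. Moreover $\max_a\beta_{k,a}-\mu_k^\top\beta_k\ge \mu_{\min}\,(\max_a\beta_{k,a}-\min_a\beta_{k,a})$, because $\mu_k$ puts mass at least $\mu_{\min}>0$ on the minimizing coordinate. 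On mean-zero blocks the range controls the norm, so $\Phi_K\to\infty$ as $\|\beta\|\to\infty$ within the subspace. Since $\Phi_K$ is continuous, a minimizer $\beta^\star$ exists.

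We then use first-order optimality. At $\beta^\star$ the gradient vanishes, so $\proj_k(\widehat K(\beta^\star))=\mu_k$ for every $k$, i.e.\ $S_{k,a}(\beta^\star)=Z(\beta^\star)\mu_{k,a}$. Write $S_{k,a}(\beta^\star)=e^{\beta^\star_{k,a}}T_a$ with $T_a$ as in the proof of Lemma~\ref{lem:noisy_sinkhorn_range}. This gives
\[
\beta^\star_{k,a}=\log\mu_{k,a}-\log T_a+\log Z(\beta^\star).
\]
For any $a,a'$ the ratio $T_a/T_{a'}$ lies in $[K_{\min}/K_{\max},K_{\max}/K_{\min}]$, since the two sums share the same index set and the same weights in the other modes. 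Also $\mu_{k,a}/\mu_{k,a'}\le 1/\mu_{\min}$. Hence $|\beta^\star_{k,a}-\beta^\star_{k,a'}|\le R_K$. This is exactly the computation in Lemma~\ref{lem:noisy_sinkhorn_range} with the errors $e_a=0$.

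The only nonroutine step is existence, which requires the coercivity estimate above. Alternatively, existence follows from the KKT/Gibbs representation \eqref{e:entropic_mot_scaling_form} of the strictly positive entropic optimizer, which the paper already asserts under full support. Once that is in hand, the range bound is immediate.
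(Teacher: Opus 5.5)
Your proof is correct and its core matches the paper's: at an optimum every marginal equals $\mu_k$, so $\beta^\star_{k,a}-\beta^\star_{k,a'}=\log(\mu_{k,a}/\mu_{k,a'})-\log(T_a/T_{a'})$, and the paper bounds the two ratios exactly as you do. The one addition is your coercivity argument on the mean-zero gauge slice (valid since $\mu_{\min}>0$ and $K_{\min}>0$), which proves that a minimizer exists; the paper assumes this by simply starting ``at an optimum,'' so your argument fills that small gap.
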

\begin{proof}
    At an optimum, the normalized tensor $\widehat K(\beta^\star)$ has marginals $\mu_k$, so for each block $k$ and coordinate $a$,
    \[
        \mu_{k,a}
        =\frac{e^{\beta^\star_{k,a}}T_a}{Z(\beta^\star)},
        \qquad
        T_a\teq\sum_{\idx:i_k=a}K_{\idx}\exp\!\left(\sum_{\ell\ne k}\beta^\star_{\ell,i_\ell}\right).
    \]
    For any $a,a'$, the sums $T_a$ and $T_{a'}$ have the same index set in the remaining modes and termwise weights that differ only through $K$, so
    \[
        \frac{K_{\min}}{K_{\max}}\le\frac{T_a}{T_{a'}}\le\frac{K_{\max}}{K_{\min}}.
    \]
    Also $\mu_{k,a}/\mu_{k,a'}\in[\mu_{\min},\mu_{\min}^{-1}]$. Therefore
    \[
        |\beta^\star_{k,a}-\beta^\star_{k,a'}|
        \le \log\frac{K_{\max}}{K_{\min}}+
        \log\frac1{\mu_{\min}}=R_K.
    \]
    The objective is invariant under adding constants to the individual blocks, so we may choose any centered representative of this optimal equivalence class.
\end{proof}

\begin{lemma}[Robust Sinkhorn objective gap]\label{lem:robust_sinkhorn_gap}
    Let $\beta$ be any potentials whose blocks satisfy
    \[
        \max_a\beta_{k,a}-\min_a\beta_{k,a}\le R_K+2\alpha,
        \qquad k\in[m].
    \]
    Then
    \[
        \Phi_K(\beta)-\Phi_K^\star\le (R_K+\alpha)E(\beta).
    \]
\end{lemma}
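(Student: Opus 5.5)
Let $\beta^\star$ be the optimal potentials from Lemma~\ref{lem:optimal_sinkhorn_range}. The plan is to use convexity of $\Phi_K$ and then bound the linearization term block by block with an oscillation (Hölder) argument. Convexity gives
\[
    \Phi_K(\beta)-\Phi_K^\star\le\langle\nabla\Phi_K(\beta),\beta-\beta^\star\rangle
    =\sum_{k=1}^m\langle q_k(\beta)-\mu_k,\beta_k-\beta^\star_k\rangle,
\]
where we used the block-gradient formula $\nabla_{\beta_k}\Phi_K(\beta)=q_k(\beta)-\mu_k$ recorded earlier. This is the standard argument of the Sinkhorn gap lemmas in \cite{lin2022complexity}, with the range bound replaced by its noisy version.

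The key observation is that each residual $q_k(\beta)-\mu_k$ sums to zero, because both vectors are probability vectors. Therefore, for every block $k$ and every constant $c_k$,
\[
    \langle q_k-\mu_k,\beta_k-\beta^\star_k\rangle=\langle q_k-\mu_k,\beta_k-\beta^\star_k-c_k\mathbf{1}_n\rangle.
\]
We choose $c_k$ to be the midpoint of the range of $v_k\teq\beta_k-\beta^\star_k$. Then $\|v_k-c_k\mathbf{1}\|_\infty\le\tfrac12(\max v_k-\min v_k)$, and Hölder's inequality bounds the $k$-th term by $\tfrac12\,\mathrm{osc}(v_k)\,\|q_k-\mu_k\|_1$.

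It remains to bound the oscillation. Since $\mathrm{osc}(\beta_k-\beta^\star_k)\le\mathrm{osc}(\beta_k)+\mathrm{osc}(\beta^\star_k)$, the hypothesis together with Lemma~\ref{lem:optimal_sinkhorn_range} gives $\mathrm{osc}(v_k)\le (R_K+2\alpha)+R_K=2R_K+2\alpha$. Multiplying by $\tfrac12$ yields $(R_K+\alpha)\|q_k-\mu_k\|_1$. Summing over $k$ then gives $(R_K+\alpha)E(\beta)$, as claimed.

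The only delicate point is the gauge. The zero-sum property of each residual lets us subtract an arbitrary per-block constant. This is what removes any dependence on the absolute centering of $\beta$ relative to $\beta^\star$, and it is also why the constant comes out as $R_K+\alpha$ rather than $2R_K+2\alpha$. One should also confirm that the minimizer attained in Lemma~\ref{lem:optimal_sinkhorn_range} realizes $\Phi_K^\star$. This holds because that lemma was derived from the optimality (marginal-matching) conditions, and such a point is a global minimizer by convexity.
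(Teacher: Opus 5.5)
Your proposal is correct and follows essentially the same route as the paper: convexity plus the block-gradient formula, then the zero-sum property of each residual $q_k(\beta)-\mu_k$ to subtract per-block constants, then H\"older. The only cosmetic difference is that the paper centers $\beta_k$ and $\beta_k^\star$ separately at their own midpoints and adds the $\ell_\infty$ bounds $(R_K+2\alpha)/2+R_K/2$, whereas you center the difference $\beta_k-\beta_k^\star$ and use subadditivity of the oscillation; both give the same constant $R_K+\alpha$.
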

\begin{proof}
    By Lemma~\ref{lem:optimal_sinkhorn_range}, there is an optimizer $\beta^\star$ whose block ranges are at most $R_K$. Convexity gives
    \[
        \Phi_K(\beta)-\Phi_K^\star
        \le \sum_{k=1}^m(\beta_k-\beta_k^\star)^\top(q_k(\beta)-\mu_k).
    \]
    Since both $q_k(\beta)$ and $\mu_k$ have mass one, each vector $q_k(\beta)-\mu_k$ has zero sum. Thus we may subtract the midpoint of each block of $\beta$ and $\beta^\star$ inside the inner products. The centered $\beta_k$ has $\ell_\infty$ norm at most $(R_K+2\alpha)/2$, and the centered $\beta_k^\star$ has $\ell_\infty$ norm at most $R_K/2$. Therefore
    \[
        (\beta_k-\beta_k^\star)^\top(q_k(\beta)-\mu_k)
        \le (R_K+\alpha)\|q_k(\beta)-\mu_k\|_1.
    \]
    Summing over $k$ proves the claim.
\end{proof}

\begin{lemma}[Stochastic drift estimate]\label{lem:stochastic_drift_sinkhorn}
    Let $(G_t)_{t\ge0}$ be a nonnegative adapted process, let $\tau$ be a stopping time, and let $R,\delta>0$. Suppose that, for some absolute constant $c>0$ and every $t<\tau$,
    \[
        \mathbb{E}[G_t-G_{t+1}\mid\mathcal{F}_t]
        \ge
        c\max\!\left\{\frac{\delta^2}{m^2},\frac{G_t^2}{m^2R^2}\right\}.
    \]
    If $G_0\le 2mR$, then $\mathbb{E}\tau=O(m^2R/\delta)$.
\end{lemma}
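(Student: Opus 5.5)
The plan is to split the run into geometric stages, one for each halving of the level of $G_t$, and to bound the expected length of each stage by an optional-stopping argument. The two terms in the drift lower bound cross at $G_t=R\delta$: above this level the quadratic term $G_t^2/(m^2R^2)$ is the larger one, and below it the constant term $\delta^2/m^2$ is.

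\emph{Levels and stages.} Set $g_0\teq 2mR$ and $g_{j}\teq g_0 2^{-j}$ for $j=0,1,\dots,J$, where $J$ is the first index with $g_J\le R\delta$. Then $g_J\ge R\delta/2$ and $J=O(\log(m/\delta))$; if $g_0\le R\delta$, take $J=0$. Define stopping times $\sigma_0\teq 0$ and
\[
\sigma_{j+1}\teq\min\{\tau,\ \inf\{t\ge\sigma_j: G_t\le g_{j+1}\}\},\qquad j<J.
\]
At time $\sigma_j$ we have $G_{\sigma_j}\le g_j$: for $j=0$ this is the hypothesis $G_0\le 2mR$, and for $j\ge 1$ it holds by definition of $\sigma_j$ unless $\tau$ has already occurred. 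The process need not be monotone. However, on $[\sigma_j,\sigma_{j+1})$ we have $G_t>g_{j+1}$ and $t<\tau$, so the hypothesis gives the uniform drift
\[
\mathbb{E}[G_t-G_{t+1}\mid\mathcal{F}_t]\ge d_j\teq \frac{c\,g_{j+1}^2}{m^2R^2}.
\]

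\emph{Optional stopping for one stage.} Consider
\[
M_t\teq G_{t\wedge\sigma_{j+1}}+d_j\,\bigl(t\wedge\sigma_{j+1}-t\wedge\sigma_{j+1}\wedge\sigma_j\bigr),\qquad t\ge\sigma_j.
\]
The drift bound makes $M_t$ a supermartingale from time $\sigma_j$ on. Stopping at the bounded time $\sigma_j+T$ and using $G\ge 0$ gives
\[
d_j\,\mathbb{E}\bigl[(\sigma_{j+1}-\sigma_j)\wedge T\bigr]\le \mathbb{E}[G_{\sigma_j}]\le g_j.
\]
Letting $T\to\infty$ by monotone convergence yields
\[
\mathbb{E}[\sigma_{j+1}-\sigma_j]\le \frac{g_j}{d_j}=\frac{4m^2R^2}{c\,g_j}.
\]
Summing over $j<J$ gives a geometric series dominated by its last term, and since $g_{J-1}\ge R\delta$ this is $O(m^2R^2/(R\delta))=O(m^2R/\delta)$.

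\emph{Final stage.} After $\sigma_J$ we have $G\le g_J\le R\delta$ at that time. Every subsequent time $t<\tau$ still satisfies the constant drift bound $c\delta^2/m^2$, regardless of whether $G$ rises again. The same supermartingale argument, run from $\sigma_J$ up to $\tau$, therefore gives
\[
\mathbb{E}[\tau-\sigma_J]\le \frac{R\delta}{c\delta^2/m^2}=\frac{m^2R}{c\,\delta}.
\]
Adding the stage bounds gives $\mathbb{E}\tau=O(m^2R/\delta)$.

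\emph{Main obstacle.} The step I expect to need the most care is the rigor of the optional-stopping step, because $G$ is not monotone and may jump back above a level it has already crossed. This is handled in two ways. First, each stage uses only the drift bound valid while $G$ exceeds the next level, and only the starting value $G_{\sigma_j}\le g_j$, which holds by construction. Second, the truncation at $\sigma_j+T$ together with monotone convergence avoids any integrability assumption on $\tau$. Note also that the quadratic drift is used only where $G>g_{j+1}\ge R\delta$, and the constant drift is used only in the last stage.
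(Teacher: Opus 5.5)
Your proof is correct, but it takes a different route from the paper. The paper does not use stages; it builds a single Lyapunov function. With $h(s)=c\max\{\delta^2/m^2,s^2/(m^2R^2)\}$ it sets $H(u)=\int_0^u ds/h(s)$, which is increasing and concave because $h$ is nondecreasing. Conditional Jensen together with $\mathbb{E}[G_{t+1}\mid\mathcal{F}_t]\le G_t-h(G_t)$ gives $\mathbb{E}[H(G_t)-H(G_{t+1})\mid\mathcal{F}_t]\ge H(G_t)-H(G_t-h(G_t))\ge 1$ for $t<\tau$. Hence $\mathbb{E}\tau\le H(G_0)\le H(2mR)$, and splitting the integral at $R\delta$ gives $O(m^2R/\delta)$.

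Your dyadic stages with constant drift $d_j$ are a discretized version of the same idea: $\sum_j g_j/d_j$ is a dyadic Riemann sum for $\int ds/h(s)$. The paper's version is shorter and gives a better constant. It also handles the non-monotonicity of $G$ without any stage stopping times; it only needs $h(G_t)\le G_t$, which follows from $G_{t+1}\ge 0$, to keep the argument of $H$ nonnegative. Your version needs only linear optional stopping, with no concavity or Jensen, and it makes the two drift regimes explicit.

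The price is bookkeeping, and two spots need tightening. First, $\mathbb{E}[G_{\sigma_j}]\le g_j$ is not literally true on $\{\sigma_j=\tau\}$, where $G_\tau$ may exceed $g_j$. Write $\mathbb{E}[G_{\sigma_j}\mathbf{1}\{\sigma_j<\tau\}]\le g_j$ instead; this suffices because the stage has length zero off that event. Second, the decomposition $\tau=\sum_{j<J}(\sigma_{j+1}-\sigma_j)+(\tau-\sigma_J)$ presumes the $\sigma_j$ are almost surely finite. This follows inductively from your stage bounds, but you should say so.
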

\begin{proof}
    Define
    \[
        h(s)\teq c\max\!\left\{\frac{\delta^2}{m^2},\frac{s^2}{m^2R^2}\right\},
        \qquad
        H(u)\teq\int_0^u\frac{1}{h(s)}\,ds.
    \]
    Since $h$ is nondecreasing, $H$ is increasing and concave. For $t<\tau$, the drift assumption gives $\mathbb{E}[G_{t+1}\mid\mathcal{F}_t]\le G_t-h(G_t)$. Since $G_{t+1}\ge0$, this also implies $h(G_t)\le G_t$ on $\{t<\tau\}$. By concavity and monotonicity of $H$,
    \[
        \mathbb{E}[H(G_{t+1})\mid\mathcal{F}_t]
        \le H(\mathbb{E}[G_{t+1}\mid\mathcal{F}_t])
        \le H(G_t-h(G_t)).
    \]
    Therefore
    \[
        \mathbb{E}[H(G_t)-H(G_{t+1})\mid\mathcal{F}_t]
        \ge H(G_t)-H(G_t-h(G_t))\ge 1,
    \]
    where the last inequality follows from $H'(s)=1/h(s)$ and the monotonicity of $h$. Applying this to the stopped process and summing gives $\mathbb{E}\tau\le H(G_0)$. Finally,
    \[
        H(G_0)
        \le
        \int_0^{R\delta}\frac{m^2}{c\delta^2}\,ds
        +\int_{R\delta}^{2mR}\frac{m^2R^2}{cs^2}\,ds
        =O\!\left(\frac{m^2R}{\delta}\right).
    \]
\end{proof}

\begin{theorem}[Quantum randomized multimarginal Sinkhorn]\label{t:quantum_tensor_sinkhorn}
    Let $K\in\mathbb{R}_{>0}^{[n]^m}$, let $0<K_{\min}\le K_{\idx}\le K_{\max}$ be known bounds for every $\idx\in[n]^m$, and let $\mu_1,\ldots,\mu_m\in\Delta^n_{>0}$. Assume coherent entry access to $K$. For every $\delta\in(0,1)$, there is a quantum algorithm which, with probability at least $2/3$, outputs potentials $\beta=(\beta_1,\ldots,\beta_m)$ such that
    \[
        \sum_{k=1}^m\|\proj_k(\widehat K(\beta))-\mu_k\|_1\le \delta.
    \]
    Set $\bar R_K\teq\max\{1,R_K\}$. The running time is
    \[
        O\!\left(
            m^5 n^{\frac{m+1}{2}}
            \frac{\bar R_K}{\delta^2}
            \log^2\!\left(e\frac{K_{\max}}{K_{\min}}\right)
            \polylog\!\left(n,m,1+\bar R_K,\delta^{-1},\mu_{\min}^{-1},1+\log\frac{K_{\max}}{K_{\min}}\right)
        \right).
    \]
\end{theorem}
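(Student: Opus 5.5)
The plan is to run the randomized noisy Sinkhorn iteration from $\beta^{(0)}=0$. At each step $t$ we draw $k_t\sim\mathrm{Unif}([m])$ and call Lemma~\ref{lem:quantum_tensor_marginals} with accuracy $\alpha$ and failure probability $p$ to obtain $\widetilde\ell_{k_t,a}$. We then apply the noisy update of Lemma~\ref{lem:noisy_sinkhorn_descent} and recenter every block to midpoint zero. The algorithm stops as soon as a check certifies $E(\beta^{(t)})\le\delta$.

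For the check, one more marginal-estimation call in every mode gives approximate marginals $\widetilde q_k$ with $\|\widetilde q_k-q_k\|_1\le e^{\alpha}-1=O(\alpha)$. These yield an estimate $\widetilde E$ with $|\widetilde E-E|=O(m\alpha)$. We stop when $\widetilde E\le\delta/2$. With $\alpha=c\,\delta/m$, stopping guarantees $E\le\delta$, and as long as we have not stopped, $E>\delta/4$. Performing the check at every iteration costs a factor $m$ more than the update itself; alternatively, we can run the check only every $m$ iterations or on a random subset.

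\textbf{Step 1: drift.} By Lemma~\ref{lem:noisy_sinkhorn_range} and induction from $\beta^{(0)}=0$, every iterate has block ranges at most $R_K+2\alpha$. Hence, with recentering, $\|\beta\|_\infty\le (R_K+2)/2$, and we may take $B=O(\bar R_K)$ in Lemma~\ref{lem:quantum_tensor_marginals}. Set $G_t\teq\Phi_K(\beta^{(t)})-\Phi_K^\star\ge0$. Averaging Lemma~\ref{lem:noisy_sinkhorn_descent} over $k_t$ and using Pinsker as in \eqref{e:random_coordinate_decrease_exact} gives
\[
\mathbb{E}[G_t-G_{t+1}\mid\mathcal F_t]\ge \frac{E(\beta^{(t)})^2}{2m^2}-\frac{\alpha^2}{2}.
\]
Before stopping, $E>\delta/4$, so choosing $c$ small makes the $\alpha^2/2$ term at most half of the first term. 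Lemma~\ref{lem:robust_sinkhorn_gap} gives $E\ge G_t/(R_K+\alpha)$. Combining the two lower bounds on $E$ yields the hypothesis of Lemma~\ref{lem:stochastic_drift_sinkhorn} with $R=\bar R_K+1$.

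\textbf{Step 2: initial gap and stopping time.} The initial gap is $G_0\le (R_K+\alpha)E(0)\le 2m(R_K+1)$, since $E\le 2m$. Lemma~\ref{lem:stochastic_drift_sinkhorn} then gives $\mathbb{E}\tau=O(m^2\bar R_K/\delta)$. By Markov's inequality, truncating at $T=O(m^2\bar R_K/\delta)$ iterations fails with probability at most $1/12$. We take $p=1/(12\,T(m+1))$ so that a union bound over all quantum calls (updates and checks) also costs at most $1/12$, and all polylogarithmic dependencies remain polylogarithmic.

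\textbf{Step 3: cost.} Each call costs $O(mn^{(m+1)/2}\alpha^{-1}\log^2(eK_{\max}/K_{\min})\,\mathrm{polylog})$, with $\alpha^{-1}=O(m/\delta)$. Multiplying by $T$ iterations and by the $(m+1)$ calls per iteration from the checks gives
\[
m^2\frac{\bar R_K}{\delta}\cdot m\cdot m n^{\frac{m+1}{2}}\frac{m}{\delta}=m^5 n^{\frac{m+1}{2}}\frac{\bar R_K}{\delta^2},
\]
which matches the claim.

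\textbf{Main obstacle.} The delicate point is the stopping rule. Lemma~\ref{lem:stochastic_drift_sinkhorn} needs a drift floor of order $\delta^2/m^2$ at every $t<\tau$, which requires $E>\Omega(\delta)$ up to $\tau$; this in turn needs a reliable noisy test for $E$, which is what forces $\alpha\sim\delta/m$. We must also check that the drift inequality survives conditioning on the success of the quantum calls. We handle this by analyzing an idealized process in which failures are replaced with exact updates; the idealized and actual processes coincide on the success event.
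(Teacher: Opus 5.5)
Your proposal is correct and follows essentially the same route as the paper. Both use uniform random block updates with $\alpha=\Theta(\delta/m)$, an $m$-call residual certificate at every iteration with a threshold test, and a drift bound built from noisy descent, Pinsker, and the robust gap lemma, fed into the stochastic drift lemma. Both then finish with Markov's inequality and a union bound, giving the same $m^5 n^{(m+1)/2}\bar R_K/\delta^2$ accounting. Your idealized-process coupling for failed quantum calls and your choice $R=\bar R_K+1$ to absorb $\alpha$ are slightly more careful than the paper's direct conditioning on all calls succeeding. One caveat: your aside about checking only every $m$ iterations, or on a random subset, is not covered by this argument, since the drift floor needs $E=\Omega(\delta)$ at every step before stopping.
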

\begin{proof}
    Initialize $\beta^{(0)}=0$ and set $T=\lceil C_{\rm iter}m^2\bar R_K/\delta\rceil$ for a sufficiently large absolute constant $C_{\rm iter}$. The algorithm first certifies $\beta^{(0)}$ by the residual test described below. If the test fails, then at iteration $t$ it samples $k_t$ uniformly from $[m]$. Use Lemma~\ref{lem:quantum_tensor_marginals} with
    \[
        \alpha_{\rm upd}\teq \frac{\delta}{16m},
        \qquad
        p\teq \frac{1}{48(m+1)T},
    \]
    to compute estimates of $\log q_{k_t,a}(\beta^{(t)})$ for all $a\in[n]$, and perform the noisy update from Lemma~\ref{lem:noisy_sinkhorn_descent}. After the update, recenter each block by subtracting its midpoint. This changes $K(\beta)$ by a global multiplicative constant and therefore leaves $\widehat K(\beta)$, all marginals, and the objective gap unchanged. By Lemma~\ref{lem:noisy_sinkhorn_range}, every iterate supplied to Lemma~\ref{lem:quantum_tensor_marginals} satisfies $\|\beta^{(t)}\|_\infty=O(\bar R_K)$, since $\alpha_{\rm upd}\le1$.

    The residual certificate used initially and after every update is the following. For every mode $k\in[m]$, call Lemma~\ref{lem:quantum_tensor_marginals} with
    \[
        \alpha_{\rm cert}\teq \frac{\delta}{16m},
        \qquad
        p\teq \frac{1}{48(m+1)T},
    \]
    obtaining estimates $\widetilde\ell_{k,a}$ of $\log q_{k,a}(\beta)$. Set $\widetilde q_{k,a}=\exp(\widetilde\ell_{k,a})$ and
    \[
        \widehat E(\beta)
        \teq
        \sum_{k=1}^m\|\widetilde q_k-\mu_k\|_1.
    \]
    If $\widehat E(\beta)\le3\delta/4$, the algorithm stops and outputs $\beta$; otherwise it continues. If no iterate passes the certificate by time $T$, the algorithm returns the last iterate.

    We first relate the certificate to the true residual. On the event that all certification calls succeed, $|\widetilde\ell_{k,a}-\log q_{k,a}|\le\alpha_{\rm cert}\le1$ for every $k,a$, so
    \[
        |\widetilde q_{k,a}-q_{k,a}|
        \le
        (e^{\alpha_{\rm cert}}-1)q_{k,a}
        \le 2\alpha_{\rm cert}q_{k,a}.
    \]
    Hence $\|\widetilde q_k-q_k\|_1\le2\alpha_{\rm cert}$ for every $k$, and
    \begin{equation}\label{e:cert_residual_accuracy}
        |\widehat E(\beta)-E(\beta)|
        \le 2m\alpha_{\rm cert}
        \le \delta/8.
    \end{equation}
    Therefore any accepted iterate satisfies $E(\beta)\le7\delta/8\le\delta$. Conversely, any iterate with $E(\beta)\le\delta/2$ is accepted, because then $\widehat E(\beta)\le5\delta/8<3\delta/4$.

    Condition on the event that all calls to Lemma~\ref{lem:quantum_tensor_marginals} succeed. With $\alpha=\alpha_{\rm upd}$, Lemma~\ref{lem:noisy_sinkhorn_descent} and \eqref{e:random_coordinate_decrease_exact} imply that whenever $E(\beta^{(t)})>\delta/2$,
    \[
        \mathbb{E}\!\left[
        \Phi_K(\beta^{(t)})-\Phi_K(\beta^{(t+1)})
        \mid \beta^{(t)}
        \right]
        \ge
        \frac{E(\beta^{(t)})^2}{2m^2}-\frac{\alpha_{\rm upd}^2}{2}
        \ge
        c_0\frac{E(\beta^{(t)})^2}{m^2}
    \]
    for an absolute constant $c_0>0$. Let $G_t\teq\Phi_K(\beta^{(t)})-\Phi_K^\star$ and let $\tau$ be the first time $E(\beta^{(t)})\le\delta/2$. For $t<\tau$, the previous display and Lemma~\ref{lem:robust_sinkhorn_gap} imply
    \[
        \mathbb{E}[G_t-G_{t+1}\mid\beta^{(t)}]
        \ge
        c_1\max\!\left\{\frac{\delta^2}{m^2},\frac{G_t^2}{m^2\bar R_K^2}\right\}
    \]
    for an absolute constant $c_1>0$.
    Since $E(\beta)\le2m$ for every $\beta$, Lemma~\ref{lem:robust_sinkhorn_gap} gives $G_0\le2m\bar R_K$. Applying Lemma~\ref{lem:stochastic_drift_sinkhorn} with $R=\bar R_K$ gives $\mathbb{E}\tau=O(m^2\bar R_K/\delta)$.

    By Markov's inequality and the choice of $C_{\rm iter}$, $\Pr[\tau\le T]\ge5/6$ conditioned on successful marginal estimates. If $\tau\le T$, the certification rule accepts an iterate by time $T$, and every accepted iterate has true residual at most $\delta$ by \eqref{e:cert_residual_accuracy}. There are at most $(m+1)T+m$ calls to Lemma~\ref{lem:quantum_tensor_marginals}; the chosen failure probabilities and a union bound make the probability that any such call fails at most $1/24$. Therefore the total success probability is at least $2/3$.

    Each iteration uses one update marginal computation and $m$ certification marginal computations, all with accuracy $\Theta(\delta/m)$ and failure probability inverse-polynomial in $mT$. Substituting these parameters and $T=O(m^2\bar R_K/\delta)$ into Lemma~\ref{lem:quantum_tensor_marginals} gives the displayed runtime with the additional certification factor $m$.
\end{proof}

We now specialize to entropy-regularized MOT. For $K_{\idx}=\exp(-C_{\idx}/\eta)$, scaling $K$ to the target marginals gives the optimal solution to \eqref{e:entropic_mot_tensor}.
\begin{corollary}[Quantum entropic MOT via multimgarginal Sinkhorn]\label{c:quantum_entropic_mot_tensor}
    Suppose $n_1=\cdots=n_m=n$, $C\ge0$, and the entries of $C$ are available through the cost-entry model above, so that $K_{\idx}=\exp(-C_{\idx}/\eta)$ can be generated coherently on demand. For every marginal accuracy $\delta\in(0,1)$, there is a quantum algorithm which outputs scaling potentials specifying a Gibbs tensor $X$ satisfying
    \[
        \sum_{k=1}^m\|\proj_k(X)-\mu_k\|_1\le\delta
    \]
    with probability at least $2/3$, in time
    \[
        O\!\left(
            m^5 n^{\frac{m+1}{2}}
            \frac{1+\|C\|_\infty/\eta+\tau_\mu}{\delta^2}
            \left(1+\frac{\|C\|_\infty}{\eta}\right)^2
            \polylog\!\left(n,m,\delta^{-1},\frac{\|C\|_\infty}{\eta},\tau_\mu\right)
        \right),
        \quad
        \tau_\mu\teq\log\frac{1}{\min_{k,a}\mu_{k,a}}.
    \]
\end{corollary}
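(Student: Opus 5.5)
The plan is to apply Theorem~\ref{t:quantum_tensor_sinkhorn} directly to the Gibbs tensor $K_{\idx}=\exp(-C_{\idx}/\eta)$. The remaining work is to identify the parameters $K_{\min}$, $K_{\max}$, $R_K$, $\bar R_K$ and $\log(eK_{\max}/K_{\min})$ in terms of $\|C\|_\infty/\eta$ and $\tau_\mu$, and to check that the coherent access hypotheses hold. The output potentials $\beta$ then specify $X=\widehat K(\beta)$, which is exactly a Gibbs tensor of the form \eqref{e:entropic_mot_scaling_form}. The marginal guarantee $\sum_k\|\proj_k(X)-\mu_k\|_1\le\delta$ with probability at least $2/3$ is inherited verbatim from the theorem.

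First, verify the hypotheses. Since $\mu_{k,a}>0$ is needed for $\mu_k\in\Delta^n_{>0}$, we use the convention stated in the contributions: zero-mass support points are pruned. This leaves the MOT problems unchanged, and afterwards $\tau_\mu<\infty$. Coherent entry access to $K$ is obtained from the cost-entry oracle \eqref{e:cost_entry_oracle}: compute $C_{\idx}$, then evaluate $\exp(-C_{\idx}/\eta)$ (or its logarithm, which is all the \textsc{LogSumExp} primitive needs) to fixed-point precision, then uncompute. This costs polylogarithmic overhead per entry, absorbed into the polylog factor. Since $0\le C_{\idx}\le\|C\|_\infty$, we may take the known bounds $K_{\max}=1$ and $K_{\min}=\exp(-\|C\|_\infty/\eta)$. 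Hence
\[
    \log\frac{K_{\max}}{K_{\min}}=\frac{\|C\|_\infty}{\eta},\qquad
    R_K=\frac{\|C\|_\infty}{\eta}+\tau_\mu,\qquad
    \bar R_K\le 1+\frac{\|C\|_\infty}{\eta}+\tau_\mu,
\]
and $\log^2(eK_{\max}/K_{\min})=(1+\|C\|_\infty/\eta)^2$. One point to check: Theorem~\ref{t:quantum_tensor_sinkhorn} is stated with the actual $K_{\min},K_{\max}$, but its proof only uses them as known bounds through Lemmas~\ref{lem:quantum_tensor_marginals}, \ref{lem:noisy_sinkhorn_range} and \ref{lem:optimal_sinkhorn_range}. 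Each of these remains valid with looser bounds, so substituting the a priori bounds is legitimate.

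Substituting these quantities into the runtime of Theorem~\ref{t:quantum_tensor_sinkhorn} gives $m^5n^{(m+1)/2}\,\delta^{-2}\,(1+\|C\|_\infty/\eta+\tau_\mu)\,(1+\|C\|_\infty/\eta)^2$, times a polylog in $n,m,\delta^{-1},\bar R_K,\mu_{\min}^{-1},1+\|C\|_\infty/\eta$. The main obstacle, mild as it is, lies in the polylog factor. The theorem's polylog contains $\mu_{\min}^{-1}=e^{\tau_\mu}$, and $\polylog(e^{\tau_\mu})=\poly(\tau_\mu)$, which cannot be absorbed into $\polylog(\tau_\mu)$. I would handle this by tracing where $\mu_{\min}$ enters. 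It appears only through the exponent bound $B=O(\bar R_K)$ passed to Lemma~\ref{lem:logsumexp}, whose cost is polylogarithmic in $1+B$, that is, in $1+\|C\|_\infty/\eta+\tau_\mu$. So the dependence is genuinely $\polylog(\tau_\mu)$, and the stated bound follows.
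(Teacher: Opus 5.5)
Your proposal is correct and is essentially the paper's proof. Both apply Theorem~\ref{t:quantum_tensor_sinkhorn} to $K_{\idx}=\exp(-C_{\idx}/\eta)$ with $K_{\max}\le 1$ and $K_{\min}\ge e^{-\|C\|_\infty/\eta}$, giving $\bar R_K\le 1+\|C\|_\infty/\eta+\tau_\mu$ and $\log^2(eK_{\max}/K_{\min})\le(1+\|C\|_\infty/\eta)^2$, with coherent access to $K$ built from the cost-entry oracle. Your extra check is also sound: the $\mu_{\min}^{-1}$ inside the theorem's polylogarithmic factor enters only through the exponent bound $B=O(\bar R_K)$ and the iteration count $T$, so the dependence on $\tau_\mu$ is polylogarithmic rather than polynomial, a step the paper's one-line substitution leaves implicit.
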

\begin{proof}
    Apply Theorem~\ref{t:quantum_tensor_sinkhorn} to the Gibbs tensor $K$. Since $C\ge0$, $K_{\max}\le1$ and $K_{\min}\ge\exp(-\|C\|_\infty/\eta)$. Hence $\bar R_K\le1+\|C\|_\infty/\eta+\tau_\mu$ and
    \[
        \log^2\!\left(\frac{K_{\max}}{K_{\min}}\right)
        \le
        \left(1+\frac{\|C\|_\infty}{\eta}\right)^2.
    \]
    A coherent query to $K_{\idx}$ is implemented by generating $C_{\idx}$ through the cost-entry oracle and computing $\exp(-C_{\idx}/\eta)$ to the fixed-point precision required by Lemma~\ref{lem:quantum_tensor_marginals}.
\end{proof}

Finally, accounting for the entropic bias and the multimarginal rounding argument of Lin et al.~\cite[Theorems~4.4--4.5]{lin2022complexity} gives an implicit approximation to unregularized MOT. This is not an explicit coupling-output theorem unless one spends $\Omega(n^m)$ time writing the tensor.
\begin{corollary}[Implicit unregularized MOT approximation]\label{c:quantum_tensor_sinkhorn_unreg}
    Suppose $C\ge0$ and $\|C\|_\infty\ge1$. Choose
    \[
        \eta=\Theta\!\left(\frac{\varepsilon}{m\log n}\right),
        \quad
        \delta=\Theta\!\left(\frac{\varepsilon}{\|C\|_\infty}\right),
    \]
    and first perturb the marginals as in \cite[Algorithm~3]{lin2022complexity} so that their minimum entry is at least $\Omega(\delta/(mn))$. Then quantum multimgarginal Sinkhorn returns, with probability at least $2/3$, scaling potentials for a Gibbs tensor whose standard multimarginal rounding has transportation cost at most
    \[
        \textup{\textsf{MOT}}(C,\musf)+\varepsilon.
    \]
    The quantum time needed to compute the scaling potentials is
    \[
         O\!\left(
            m^8 n^{\frac{m+1}{2}}
            \left(\frac{\|C\|_\infty}{\varepsilon}\right)^5 \polylog \left(m, n, \frac{\|C\|_\infty}{\varepsilon} \right)
        \right).
    \]
    This bound is for the implicit scaling-potential output. Materializing the rounded coupling tensor, or carrying out the classical rounding procedure explicitly, requires an additional $O(n^m)$ classical operations.
\end{corollary}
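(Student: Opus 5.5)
The plan is to plug the parameter choices into Corollary~\ref{c:quantum_entropic_mot_tensor}, applied to the perturbed marginals, and then invoke the bias-plus-rounding analysis of Lin et al.~\cite[Theorems~4.4--4.5]{lin2022complexity} for correctness.

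\textbf{Correctness.} First replace each $\mu_k$ by $\tilde\mu_k=(1-\delta/8)\mu_k+\frac{\delta}{8n}\mathbf 1_n$, as in \cite[Algorithm~3]{lin2022complexity}. Then $\|\tilde\mu_k-\mu_k\|_1\le\delta/4$ and $\min_{k,a}\tilde\mu_{k,a}\ge\delta/(8n)$, so $\tau_\mu=O(\log(n/\delta))$.

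Next run the algorithm of Corollary~\ref{c:quantum_entropic_mot_tensor} with target marginals $\tilde\mu$ and marginal accuracy $\delta/2$. This returns potentials $\beta$ whose Gibbs tensor $\widehat K(\beta)$ has total marginal error at most $\delta$ with respect to $\musf$.

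The rounding step of \cite{lin2022complexity} is the multimarginal analogue of Altschuler et al. It produces $\widehat X\in\Ucal(\musf)$ with $\|\widehat X-\widehat K(\beta)\|_1\le 2\sum_k\|\proj_k(\widehat K(\beta))-\mu_k\|_1$. We then chain the standard estimates:
\begin{itemize}
\item $\langle C,\widehat X\rangle\le\langle C,\widehat K(\beta)\rangle+2\|C\|_\infty\delta$;
\item $\widehat K(\beta)$ is the exact entropic optimizer for its own marginals, and its marginals are $\delta$-close to $\musf$;
\item the entropy satisfies $0\le H\le m\log n$, giving bias at most $\eta m\log n$;
\item a near-optimality argument comparing against a rounded optimal coupling costs another $O(\|C\|_\infty\delta)$.
\end{itemize}
Together these give $\langle C,\widehat X\rangle\le\textsf{MOT}(C,\musf)+O(\eta m\log n+\|C\|_\infty\delta)$. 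The stated choices $\eta=\Theta(\varepsilon/(m\log n))$ and $\delta=\Theta(\varepsilon/\|C\|_\infty)$, with small enough constants, make this at most $\varepsilon$.

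I will cite these theorems rather than reprove them. The one point to check is that their hypothesis is a marginal-error bound, not a specific algorithm, so our randomized-block output qualifies.

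\textbf{Runtime.} Substitute into Corollary~\ref{c:quantum_entropic_mot_tensor}. Since $\|C\|_\infty/\eta=\Theta(m\|C\|_\infty\log n/\varepsilon)$, the relevant factors become:
\begin{itemize}
\item the factor $(1+\|C\|_\infty/\eta+\tau_\mu)$ is $\widetilde O(m\|C\|_\infty/\varepsilon)$, using $\|C\|_\infty\ge1$ and $\varepsilon<\|C\|_\infty$ to absorb $\tau_\mu$;
\item the factor $(1+\|C\|_\infty/\eta)^2$ is $\widetilde O(m^2\|C\|_\infty^2/\varepsilon^2)$;
\item the factor $\delta^{-2}$ is $\|C\|_\infty^2/\varepsilon^2$.
\end{itemize}
Multiplying by $m^5n^{(m+1)/2}$ gives $\widetilde O(m^8n^{(m+1)/2}(\|C\|_\infty/\varepsilon)^5)$.

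The polylog arguments, which include $\delta^{-1}$, $\|C\|_\infty/\eta$ and $\tau_\mu$, are all polylogarithmic in $m$, $n$ and $\|C\|_\infty/\varepsilon$. The perturbation is classical $O(mn)$ work, which is dominated. The $O(n^m)$ remark holds because rounding touches every entry.

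\textbf{Main obstacle.} The delicate step is the correctness chain. Lin et al.'s guarantee is phrased for their greedy Sinkhorn output and for $\ell_1$ marginal error measured against the perturbed marginals. I would restate their Theorem~4.5 argument in the form ``any Gibbs tensor $\widehat K(\beta)$ with marginal error $\delta$ w.r.t.\ $\tilde\mu$, after rounding to $\musf$, costs at most $\textsf{MOT}+O(\eta m\log n+\|C\|_\infty\delta)$.''

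This uses two facts:
\begin{itemize}
\item $\widehat K(\beta)$ exactly solves $\mathrm{MOT}_\eta$ for its own marginals $q=(q_1,\ldots,q_m)$;
\item for any $X^\star\in\Ucal(\musf)$, rounding $X^\star$ to marginals $q$ changes its cost by at most $2\|C\|_\infty\sum_k\|q_k-\mu_k\|_1$.
\end{itemize}
Keeping the constants explicit at this stage is what fixes the hidden constants in $\eta$ and $\delta$.
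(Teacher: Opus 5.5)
Your route matches the paper's proof: perturb the marginals, apply Corollary~\ref{c:quantum_entropic_mot_tensor}, bound the entropic bias by $\eta m\log n$, bound the rounding loss by $O(\|C\|_\infty\delta)$ via \cite[Theorems~4.4--4.5]{lin2022complexity}, and read off $m^5\cdot m\cdot m^2=m^8$ and $\delta^{-2}(\|C\|_\infty/\eta)^3\to(\|C\|_\infty/\varepsilon)^5$. Your explicit comparison argument (the Gibbs tensor is the entropic optimizer for its own marginals, and you compare it against $X^\star$ rounded to those marginals) is simply the unpacked version of what the paper cites.

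There is one bookkeeping error in the perturbation step. With $\tilde\mu_k=(1-\delta/8)\mu_k+\frac{\delta}{8n}\mathbf{1}_n$, each marginal moves by up to $\delta/4$ in $\ell_1$. The total $\sum_k\|\tilde\mu_k-\mu_k\|_1$ can therefore reach $m\delta/4$. Your claim that the returned Gibbs tensor has total marginal error at most $\delta$ with respect to $\musf$ then fails for $m\ge 3$.

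As written, the rounding loss would be $O(m\|C\|_\infty\delta)$. That forces $\delta=\Theta(\varepsilon/(m\|C\|_\infty))$ and costs an extra factor $m^2$ in the runtime through $\delta^{-2}$.

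The fix is the scaling built into the statement's requirement that the minimum entry be $\Omega(\delta/(mn))$. Mix with weight $\delta/(8m)$, i.e.\ use $\tilde\mu_k=(1-\frac{\delta}{8m})\mu_k+\frac{\delta}{8mn}\mathbf{1}_n$. This keeps the total perturbation at most $\delta/4$, so running with accuracy $\delta/2$ gives total error at most $3\delta/4$ with respect to $\musf$. Moreover $\tau_\mu=O(\log(mn/\delta))$ stays polylogarithmic, so the rest of your correctness chain and runtime count go through unchanged.
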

\begin{proof}
    The Shannon entropy of a probability tensor on $[n]^m$ is at most $m\log n$, so the entropic bias is at most $\eta m\log n$. The choice of $\eta$ makes this $O(\varepsilon)$. The rounding procedure of Lin et al.~\cite[Theorem~4.4]{lin2022complexity} changes the objective by at most $O(\|C\|_\infty\delta)$ when the total marginal error is $\delta$, and our choice of $\delta$ makes this also $O(\varepsilon)$. The marginal perturbation step contributes another $O(\varepsilon)$ and ensures $\tau_\mu=\widetilde O(1)$. Substituting the choices of $\eta$ and $\delta$ into Corollary~\ref{c:quantum_entropic_mot_tensor} gives the displayed runtime: the tensor theorem contributes $m^5$, the factor $\|C\|_\infty/\eta$ contributes one more power of $m$, and the explicit $\log^2(K_{\max}/K_{\min})$ term contributes two more.
\end{proof}

\subsection{Comparison with existing approaches}
The quantum multimgarginal Sinkhorn result is closest in spirit to quantum algorithms for matrix scaling and matrix balancing~\cite{vanapeldoorn2021scaling, gribling2022scaling, nieuwboer2024classical}, but it is not an immediate corollary of those results. In the bimarginal case, entropic OT is a matrix-scaling problem: one scales a positive matrix so that its two marginals match prescribed vectors. For MOT with $m\ge3$, the object is a positive order-$m$ tensor, the constraints are $m$ tensor marginals, and the natural update requires estimating fiber sums over $n^{m-1}$ entries rather than row or column sums over $n$ entries. Our algorithm reuses the same quantum primitive underlying the matrix-scaling work, namely quantum estimation of log-sum-exp quantities, but combines it with the multimarginal Sinkhorn geometry and rounding analysis of Lin et al.~\cite{lin2022complexity}.  

The special case $m=2$ is stronger than what the multimgarginal Sinkhorn theorem alone gives. Specializing our multimgarginal Sinkhorn bound to $m=2$ recovers a quantum Sinkhorn-type runtime of order $\widetilde O(n^{3/2}(\|C\|_\infty/\varepsilon)^5+n^2)$ after accounting for entropic bias and rounding. Unpacking the log factors in the subroutines as we did here, the quantum box-constrained Newton method for matrix scaling~\cite{gribling2022scaling, nieuwboer2024classical} implies a $\widetilde O(n^{3/2}(\|C\|_\infty/\varepsilon)^{9/2}+n^2)$ algorithm for OT. This is an improvement (in precision) for $m=2$, but it relies on matrix-specific structure: the box-Newton method works with the matrix-scaling potential and linear-algebra subroutines for the associated bipartite row/column scaling geometry. We do not know of an analogue with the same guarantees the multimarginal setting, where the Newton systems and rounding constraints no longer reduce to ordinary matrix scaling or matrix balancing.  

For multimarginal OT, one could also apply the tall-LP quantum interior-point method of Apers and Gribling~\cite{apers2023quantum} to the dual formulation  \eqref{e:MOT_dual_tensor}. A row query to this LP is sparse: the constraint indexed by $\idx$ contains exactly $m$ nonzero coefficients and one query to $C_{\idx}$. Thus the Apers--Gribling Lewis-weight IPM applies with $M$ variables and $N+2M$ inequalities, giving row-query complexity
\[
    \widetilde{O}(\sqrt{N+2M}\,M^5)
    = \widetilde{O}\!\left(n^{\frac{m}{2}}(mn)^5\right) = \widetilde{O}\!\left(n^{\frac{m+10}{2}} m^5\right),
\]
and time $\widetilde{O}(\sqrt{N+2M}\,\poly(M))$; their volumetric-barrier bound gives $\widetilde{O}((N+2M)^{\frac{3}{4}}M^{\frac{5}{4}})$ row queries. These IPM bounds have only polylogarithmic dependence on $1/\varepsilon$, but the generic method pays a large polynomial overhead in the smaller dimension $M=mn$. Therefore, the off-the-shelf quantum IPM is not competitive in the low-precision regime targeted here; improving this comparison would require an MOT-specific IPM implementation that substantially reduces the $\poly(M)$ overhead.

There are also quantum LP solvers~\cite{van2018improvements, bouland2023quantum, gao2024logarithmic} based on combining quantum Gibbs sampling with a well-known classical stochastic mirror descent method~\cite{grigoriadis1995sublinear}. The current state of the art for this family of algorithms solves general LPs with runtime $O((\sqrt{N} + \sqrt{M})(\| y\|_1  / \varepsilon)^{\frac{5}{2}} +  (\| y\|_1  / \varepsilon)^{3})$. In view of Proposition~\ref{prop:multi-ot-opt-dual-norm-bound}, $\| y \|_1 \leq mn \| C \|_{\infty}$, so these algorithms currently solve MOT in time
$$ \widetilde{O}\left((\sqrt{n^m} + \sqrt{mn}) \left( \frac{ mn \| C \|_{\infty}}{\varepsilon} \right)^{\frac{5}{2}} +  \left( \frac{mn\| C\|_{\infty}}{\varepsilon} \right)^{3} \right) =  \widetilde{O}\left( \poly(m) n^{\frac{m +5}{2}} \left( \frac{\| C \|_{\infty}}{\varepsilon} \right)^{\frac{5}{2}} \right),$$
which is a factor $O(n^{\frac{3}{2}}/\sqrt{\varepsilon})$ slower than the algorithm of Theorem~\ref{t:quantum_md}, and $O(n^2)$ slower than quantum Sinkhorn.

\section{Lower bounds}\label{s:lower_bounds}
We now provide quantum and classical lower bounds for approximating \eqref{e:MOT} in the unstructured cost-query model. The reduction below interpolates between constant and fine accuracy. At constant additive accuracy it gives a search lower bound over $n^{m-1}$ hidden blocks; below the single-entry mass scale $1/n$, it recovers search over all $n^m$ tensor entries.

Identify $[n]$ with the cyclic group $\mathbb{Z}_n$. For each shift vector $s=(s_2,\ldots,s_m)\in\mathbb{Z}_n^{m-1}$, define the perfect $m$-partite matching
\[
    P_s \teq \{(i,i+s_2,\ldots,i+s_m): i\in\mathbb{Z}_n\}\subseteq [n]^m,
\]
with addition modulo $n$. The sets $\{P_s:s\in\mathbb{Z}_n^{m-1}\}$ partition $[n]^m$.

\begin{theorem}[Query lower bound for MOT]\label{thm:query_lower_bound_mot}
    Fix $n\geq2$ and $m\geq2$, and let all marginals be uniform: $\mu_1=\cdots=\mu_m=u_n$. For any $\varepsilon\in(0,1/2)$, any bounded-error randomized classical algorithm that, for every Boolean cost tensor $C\in\{0,1\}^{[n]^m}$, outputs an additive $\varepsilon$-approximation to \eqref{e:MOT} requires
    \[
        \Omega\!\left(\frac{n^m}{1+\varepsilon n}\right)
    \]
    cost queries. Any bounded-error quantum algorithm requires
    \[
        \Omega\!\left(\sqrt{\frac{n^m}{1+\varepsilon n}}\right)
    \]
    cost queries.
\end{theorem}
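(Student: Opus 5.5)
We reduce unstructured search (more precisely, the decision version: is there a marked block or not) to $\varepsilon$-approximate MOT. Set $b\teq\lceil 4\varepsilon n\rceil+1$, so $b=\Theta(1+\varepsilon n)$ and $b\le n$ after adjusting constants; if $b>n$, we simply use $b=n$ and the bounds become $\Omega(n^{m-1})$, which is still of the claimed order. Group the $n^{m-1}$ matchings $P_s$ into $K\teq\lfloor n^{m-1}/1\rfloor$ \emph{blocks}. The cleanest choice is to use the hidden object as a set of $b$ cells within a single matching. Concretely, partition each $P_s$ (which has $n$ cells) into $\lfloor n/b\rfloor$ consecutive segments of $b$ cells, giving roughly $n^m/b$ segments $Q_1,\ldots,Q_L$ partitioning (most of) $[n]^m$, with $L=\Theta(n^m/b)=\Theta(n^m/(1+\varepsilon n))$.

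\textbf{Hard instances.} Let $C^{(0)}\equiv 1$, and for $j\in[L]$ let $C^{(j)}$ equal $0$ on $Q_j$ and $1$ elsewhere. Then $\textup{\textsf{MOT}}(C^{(0)},\musf)=1$. For $C^{(j)}$, first note that the coupling uniform on the matching $P_s\supseteq Q_j$ (mass $1/n$ per cell) is feasible for uniform marginals, so $\textup{\textsf{MOT}}(C^{(j)},\musf)\le 1-b/n$. Conversely, we want a gap of more than $2\varepsilon$: since $b/n>4\varepsilon$ (when $b\le n$), we get $1-b/n<1-2\varepsilon$, while $1>1-2\varepsilon+2\varepsilon$. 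Thus any $\varepsilon$-approximation distinguishes $C^{(0)}$ from every $C^{(j)}$, i.e.\ decides whether some block is marked. For the regime $\varepsilon n<1$, take $b=1$ (a single zero cell), so the value drops by exactly $1/n$, which exceeds $2\varepsilon$ when $\varepsilon<1/(2n)$. Intermediate constants are absorbed by choosing $b\approx 3\varepsilon n$ and noting $b/n\ge 3\varepsilon>2\varepsilon$.

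\textbf{Query simulation.} A cost query to $C^{(j)}$ at $\idx$ is answered by one query to a Boolean string $x\in\{0,1\}^L$ (with $x_j=1$ iff block $j$ is marked) at the index of the segment containing $\idx$, returning $1-x_{\ell(\idx)}$. The same holds coherently for quantum queries, and cells outside all segments are fixed to $1$. Distinguishing $x=0$ from a weight-one $x$ requires $\Omega(L)$ randomized classical and $\Omega(\sqrt L)$ quantum queries (Grover/BBBV lower bound), giving the claimed $\Omega(n^m/(1+\varepsilon n))$ and $\Omega(\sqrt{n^m/(1+\varepsilon n)})$.

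The main obstacle is the bookkeeping of constants: we must ensure $b\le n$, so that the segment fits inside one matching, while keeping $b/n>2\varepsilon$. This is where $\varepsilon<1/2$ is used; for $\varepsilon$ close to $1/2$ we take $b=n$, which gives a value gap of $1>2\varepsilon$ and $L=n^{m-1}$. We must also check that $L=\Omega(n^m/(1+\varepsilon n))$ across both regimes, and that the uncovered remainder cells, fewer than $b$ per matching, are harmless.
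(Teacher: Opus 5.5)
Your proposal is correct and is essentially the paper's argument. Like the paper, you split each cyclic matching $P_s$ into consecutive segments of length $\Theta(1+\varepsilon n)$, zero out one hidden segment, use the uniform coupling on the containing matching to push the value below $1-2\varepsilon$, and simulate each cost query with one query to a promised search instance on $\Theta(n^m/(1+\varepsilon n))$ bits.

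On presentation: the paper's choice $r=\lfloor 2\varepsilon n\rfloor+1$ gives $r\le n$ and $r/n>2\varepsilon$ automatically for $\varepsilon<1/2$. Your capping step and the extra alternatives ($b=1$, $b\approx 3\varepsilon n$) are therefore unnecessary; your first choice $b=\lceil 4\varepsilon n\rceil+1$, capped at $n$, already handles every regime. Also, the garbled inequality ``$1>1-2\varepsilon+2\varepsilon$'' should simply say that the estimate is at least $1-\varepsilon$ in the unmarked case and strictly below $1-\varepsilon$ in the marked case.
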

\begin{proof}
    Set $r\teq \lfloor 2\varepsilon n\rfloor+1$, so $1\leq r\leq n$ and $r/n>2\varepsilon$. Split each matching $P_s$ into $L\teq\lfloor n/r\rfloor$ disjoint consecutive blocks of size $r$:
    \[
        P_{s,t}\teq \{(i,i+s_2,\ldots,i+s_m): tr\leq i<tr+r\},
        \qquad t=0,\ldots,L-1,
    \]
    ignoring any leftover entries. The number of blocks is
    \[
        B=n^{m-1}L=\Omega\!\left(\frac{n^m}{1+\varepsilon n}\right).
    \]

    We reduce from promised unstructured search on $B$ bits indexed by the pairs $(s,t)$, with promise $|z|\in\{0,1\}$. Given such a search input, define a Boolean cost tensor by setting $C^z_{\idx}=0$ if $\idx\in P_{s,t}$ for the unique indexed block with $z_{s,t}=1$, and $C^z_{\idx}=1$ otherwise. Equivalently, a query to $C^z$ determines the unique matching $P_s$ containing $\idx$, checks whether $\idx$ lies in one of the retained length-$r$ blocks, and if so queries the corresponding bit $z_{s,t}$. Thus one cost query is simulated using at most one query to $z$.

    If $|z|=0$, then $C^z\equiv1$, so every feasible coupling has cost $1$. If $|z|=1$, let $P_{s,t}$ be the marked block. The feasible coupling that places mass $1/n$ on every entry of the full matching $P_s$ has cost $1-r/n$, because exactly the $r$ entries in $P_{s,t}$ have zero cost. Therefore the MOT value is at most $1-r/n$ in the marked case. Since $r/n>2\varepsilon$, an additive $\varepsilon$ value estimate is at least $1-\varepsilon$ in the unmarked case and at most $1-r/n+\varepsilon<1-\varepsilon$ in the marked case. Thus any such algorithm distinguishes the two search promises. The stated lower bounds follow from the randomized and quantum query lower bounds for promised unstructured search on $B$ items~\cite{BBBV,zalka1999grover}.
\end{proof}
 
\section*{Acknowledgements}
 The authors thank Rob Otter and Ruslan Shaydulin for executive support and valuable feedback on this project. We also acknowledge our colleagues at the Global Technology Applied Research Center of JPMorganChase.

 \section*{Use of large-language models}
 The authors used LLMs for limited language editing, including grammar, style, and wording suggestions, and for assistance in checking the presentation of complexity estimates and logarithmic factors. The authors independently reviewed all mathematical arguments and are fully responsible for the manuscript's content.

\bibliographystyle{alpha}
\bibliography{references}

\section*{Disclaimer}
This paper was prepared for informational purposes by the Global Technology Applied Research center of JPMorgan Chase \& Co. This paper is not a product of the Research Department of JPMorgan Chase \& Co. or its affiliates. Neither JPMorgan Chase \& Co. nor any of its affiliates makes any explicit or implied representation or warranty and none of them accept any liability in connection with this paper, including, without limitation, with respect to the completeness, accuracy, or reliability of the information contained herein and the potential legal, compliance, tax, or accounting effects thereof. This document is not intended as investment research or investment advice, or as a recommendation, offer, or solicitation for the purchase or sale of any security, financial instrument, financial product or service, or to be used in any way for evaluating the merits of participating in any transaction.
\end{document}